\newsavebox{\measurebox}
\theoremstyle{definition}
\newtheorem{definition}{Definition}[section]
\newtheorem{theorem}{Theorem}
\newtheorem{remark}{Remark}
\newtheorem{problem}{Problem}
\lstdefinelanguage{JavaScript}{
  keywords={typeof, new, true, false, catch, function, return, null, catch, switch, var, if, in, while, do, else, case, break},
  keywordstyle=\color{blue}\bfseries,
  ndkeywords={class, export, boolean, throw, implements, import, this},
  ndkeywordstyle=\color{darkgray}\bfseries,
  identifierstyle=\color{black},
  sensitive=false,  
  comment=[l]{//},
  morecomment=[s]{/*}{*/},
  commentstyle=\color{purple}\ttfamily,
  stringstyle=\color{black}\ttfamily,
  morestring=[b]',
  morestring=[b]",
  showstringspaces=false
}
\newcommand{\LMC}[1]{\textcolor{black}{#1}}
\begin{document}
%
\title{Cooperative Intersection Crossing over 5G}
\author{Luca~Maria~Castiglione,
        Paolo~Falcone,
        Alberto~Petrillo,
        Simon~Pietro~Romano,
        and~Stefania~Santini
\thanks{L.M.~Castiglione is with Department of Computing, Imperial College London. Work done whilst at DIETI.}%
\thanks{A.~Petrillo, S.P.~Romano and S.~ Santini are with the Department of Electrical Engineering and Information Technology (DIETI), University of Napoli Federico II, Napoli}
\thanks{P.Falcone is with Department of Electrical Engineering, Chalmers University of Technology, Gothenburg}}

%
\maketitle

\begin{abstract}
Autonomous driving is a safety critical application of sensing and decision-making technologies. Communication technologies extend the awareness capabilities of vehicles, beyond what is achievable with the on-board systems only. Nonetheless, issues typically related to wireless networking must be taken into account when designing safe and reliable autonomous systems. The aim of this work is to present a control algorithm and a communication paradigm over $5$G networks for negotiating traffic junctions in urban areas. \LMC{The proposed control framework has been shown to converge in a finite time and the supporting communication software has been designed with the objective of minimizing communication delays. At the same time, the underlying network guarantees reliability of the communication.} The proposed framework has been successfully deployed and tested, in partnership with Ericsson AB, at the AstaZero proving ground in Goteborg, Sweden. In our experiments, three autonomous vehicles successfully drove through an intersection of $235$ square meters in a urban scenario. 

\end{abstract}

\begin{IEEEkeywords}
Autonomous vehicles, Distributed control and coordination, Network-based communication, 5G Networks, Performance measurements.
\end{IEEEkeywords}

%
\IEEEpeerreviewmaketitle

\section{Introduction}
\label{sec:introduction}
%
%
%
%
Autonomous Driving (AD) is definitely one of the most challenging safety critical applications as it involves, among others, advanced sensing and control technologies. 
Furthermore, communication with other vehicles and/or the traffic infrastructure is expected to influence the development of AD technologies, as it allows to potentially 
improve the environment awareness beyond the range of the current sensing systems such as cameras, lidars and radars. When relying upon network-based coordination, issues typically related to wireless communication must be taken into account in order to design control algorithms and driving software that are guaranteed to be both safe and reliable. Structures such as buildings and walls that are commonly part of urban scenarios act as obstacles against the direct communication between two or more mobile nodes and may significantly degrade the communication performance, due to the so called \emph{shadowing effect}~\cite{ShadowingEffect}. Therefore, it is beneficial to use a communications technology that overcomes the shadowing and implements inter-node communication through an upstream centralized dispatching layer. In this paper we propose to realize such a higher-level dispatching framework by leveraging $5$G-enabled cloud-based inter-vehicle communication. 
With the proposed approach, every vehicle receives real time traffic updates from the cloud and is made aware of the presence of other nodes. 
A $5$G-based communications solution overcomes the problems due to local obstacles since the cellular network, with the aid of the base station, can establish reliable communications among vehicles. Such a solution would be cumbersome to deploy and would have high costs if implemented by leveraging classic Wi-Fi 802.11P. This is due to the need of installing numerous access points to overcome shadowing and radio coverage issues. In addition, the cloud based technology powered by Ericsson enables communication to meet stringent time constraints requested by real-time distributed communication algorithms. 
The prerequisite for the applicability of fully distributed control architectures for cooperative driving in urban scenarios is the availability of a reliable network that supports $4$G+ cloud based communications. The pre-5G proof-of-concept (PoC) at Astazero uses LTE radio with 5G EPC (\emph{Evolved Packet Core}) and is designed to support low latency ultra-reliable communications. It is an ideal candidate for safety critical autonomous driving applications. \\
In this paper, we address the challenging problem of coordinating connected self-driving cars at urban traffic junctions, where traffic efficiency has to be achieved while guaranteeing safety. In the control paradigm, the cyber-physical system is intuitively represented as a multi-agent system (MAS) composed of different dynamical agents, i.e., the vehicles, that automatically control their dynamical behavior by leveraging both local information and information shared with their neighbors via the communication network. The fully autonomous coordination of the self-driving cars at road intersection is solved by  proposing  a  distributed  nonlinear cooperative protocol based on the MAS abstraction. Note that MAS tools appear to be an alternative viable framework for controlling vehicles in a completely distributed fashion with a computational load compatible with real-life automotive applications.
More notably, the effectiveness of the theoretical framework is experimentally tested by enabling communication through the pre-$5G$ PoC network \LMC{deployed} 
by Ericsson at the \emph{AstaZero} proving ground for autonomous vehicles. Experiments were carried out on two  cars, namely Volvo Car $XC90$ and Volvo Car $S90$, and one truck, Volvo $FH16$.
The main outcome of this research work shows that 4G+/5G networks will definitely play an important role in automotive applications, by allowing safe, real-time and reliable autonomous driving maneuvers.\\ 
\LMC{The paper is organized as follows. Section \ref{sec:background} presents state of the art strategies for the coordination of autonomous vehicles over street junctions. Section \ref{preliminaries} and \ref{sec:problemstatmenet} introduce, respectively, mathematical preliminaries and the mathematical formulation leveraged to properly describe the tackled problem. Then, in section \ref{sec:ControlProtocol} the adopted control strategy is outlined. The application module developed to enable the communication between vehicles over cellular networks is introduced in section \ref{sec:communicationsoftware}, while section \ref{sec:experimentalsetup} presents a detailed description of the hardware instrumentation deployed on the autonomous vehicles and involved in field trials. Experimental results are  disclosed and validated in Section \ref{sec:experiments}, while network performance measured during the experiments is discussed in section \ref{sec:netperformances}. Finally, conclusions are summarized in section \ref{sec:conclusions}}. 

\begin{table}[!t]\label{5gCaratteristiche}
\centering
\caption{LTE+ 5G EPC communication against Wi-Fi in urban areas}
\label{tab:wifivs5g}
	\begin{tabular}{|l|p{18mm}|p{18mm}|p{18mm}|}
		\hline
		  & \textbf{Wi-Fi 802.11a} & \textbf{Wi-Fi 802.11p} 	& \textbf{Cellular LTE}\\ \hline
		  \textbf{Shadowing} & Suffers & Suffers & Does not suffer\\ \hline
		  \textbf{Available bands} & $2.4$ GHz and $5$ GHz & $5.85-5.925$ GHz & $400$MHz to $60$ GHz\\ \hline
		  \textbf{Authentication} & Association per station & Originally absent at MAC Level & Association per handover\\ \hline
		  \textbf{Reliability} & Absent & Guaranteed by halving the bandwidth of 802.11a & Ultra Reliable Low Latency Communication (URLLC) supported in 5G NR\\ \hline
		  \textbf{Sustainability} & New installations required & New installations required & Already deployed equipment usable\\ \hline
	\end{tabular}   
\end{table}

\section{Background}
\label{sec:background}
In the rich technical literature about connected autonomous vehicles, different techniques for safe intersection crossing have been mainly categorized as either \textit{centralized} or \textit{decentralized} (see \cite{rios2017survey} and references therein). In centralized approaches, an Intersection Coordination Unit (ICU) acts as a supervisor that 
coordinates vehicles' tasks in order to 
optimize some performance index while avoiding collisions \cite{zhu2015linear,kamal2015vehicle}.
However, when considering an intersection involving a large number of autonomous vehicles, such centralized architectures may result unsuitable because of both their limited capability to gather and process a large data set, and the difficulty arising from solving in real-time the consequent large-scale optimization problem \cite{zheng2017distribute}.
On the other hand, in decentralized approaches each vehicle determines its dynamic behavior on the basis of only the information received by its neighbors. In particular, once the crossing time or order is scheduled, a control strategy locally provides the required acceleration/deceleration profile for each vehicle, based on the information received from its neighboring vehicles.
Optimal control approaches are common to enforce the hard safety constraints necessary to avoid collisions, as for example in \cite{liu2017distributed,chalmersexp1,chalmersexp2}. More notably, \cite{chalmersexp1,chalmersexp2} also 
carried out an experimental campaign by leveraging Vehicle-to-Vehicle (V2V) over Wi-Fi (based on the IEEE 802.11p protocol) and provided the experimental validation of the proposed optimal control approach. In this case, experiments are performed in an extra-urban area where no structure, such as buildings and walls, are present. Therefore, some of the issues related to wireless communications in urban scenarios (see Table~\ref{tab:wifivs5g}) have not been considered. Main limitations in the use of the Wi-Fi Point to Point communication in urban scenarios come from the shape of the latter. To this extent, elements such as buildings, trees and walls constitute an obstacle to high frequency (Wi-Fi) communications by shadowing the signal. To this extent, 802.11p happens to be more suitable in modern cities than 802.11a. In this specification the cyclic prefix length is doubled by halving the bandwidth, which in turn gives more resilience to large delay spreads. However, 4G+/5G is preferable over Wi-FI, in this context, as it has been proved to be more reliable, as well as more sustainable in terms of deployment. Cellular communications are preferred since 5G NR enables reliability through its support to Ultra Reliable Low Latency Communication (URLLC). Also, by leveraging the cellular network for inter-vehicle communication, there is no need to deploy and install further technical equipment on the ground. 4G+/5G is intrinsically less disturbed as cellular networks operate within a controlled and licensed spectrum. 

In view of the above considerations, this work explores the possible use of a pre-$5$G PoC, using LTE radio with 5G EPC technology, for the Cooperative Intersection Crossing (CIC). This entails that each vehicle autonomously makes decisions based on the information it receives from the pre-$5$G network, minimizing the computational delays at the road infrastructure side, that might significantly increase when several vehicles/nodes are approaching the intersection area. Connections between autonomous vehicles and infrastructure have been organized as shown Figure \ref{fig:pp_block_diagram}.

\begin{figure}[t]
\centering
\includegraphics[width=0.99\columnwidth]{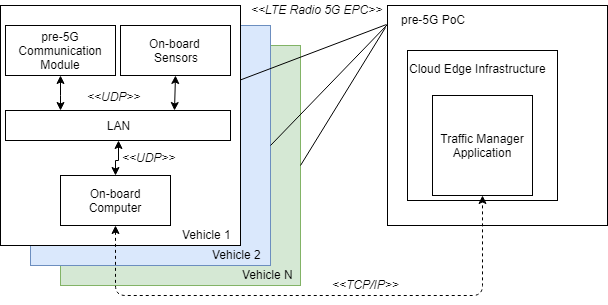}
\caption{Connection among road users and Ericsson pre-5G PoC infrastructure}
\vspace{-10pt}
\label{fig:pp_block_diagram}
\end{figure}

In the following, we theoretically and experimentally prove that the proposed approach is capable to meet hard-enough real-time constraints.
The completely distributed nonlinear finite-time control strategy allows the cooperative negotiation of an intersection, while collisions are prevented by the achievement of the desired virtual formation in a finite time $T$ before the first vehicle accesses the core of the intersection. It is worth noting that, while a cross intersection is considered throughout this paper, the proposed framework can be applied to any type of traffic junction.

\section{Nomenclature and Mathematical Preliminaries}
\label{preliminaries}
The communication network established among vehicles can be modeled as a graph, where each vehicle is represented by a \textit{node}, while the existence of a communication link between a pair of vehicles by an \textit{edge}.
Specifically, the communication topology of a group of $N$ vehicles can be described by an undirected graph $\mathcal{G}_N = (\mathcal{V}_N,\mathcal{E}_N)$ of order $N$, with vertex set $\mathcal{V}_N = \{ 1,\dots,N \}$ and edge set $\mathcal{E}_N \subset \mathcal{V}_N \times \mathcal{V}_N $, where the presence of the edge $(i,j) \in \mathcal{E}_N$ indicates that the vehicle $i$ receives information from vehicle $j$, and viceversa.
The topology of the graph is associated to the \textit{binary adjacency matrix} $\mathcal{A}_N = [a_{ij}]_{N\times N}$ encoding vehicle communication relationship, where $a_{ij}=1$ if $(i,j) \in \mathcal{E}_N$, and $a_{ij}=0$ otherwise. Note that, $a_{ii}=0$ since self-edges $(i,i)$ are not considered. Therefore, each vehicle $i$ receives the status of all vehicles that are  members of its neighboring set  $\mathcal{N}_i = \{ j\in \mathcal{V}_N: (i,j)\in \mathcal{E}_N, j \neq i \}$.
Moreover, a \textit{path} in a graph is an ordered sequence of vertices such that any pair of consecutive vertices in the sequence is an edge of the graph.
Here, according to the above definitions, the 
graph $\mathcal{G}_N$ that describes the communication topology of the cooperative vehicles is assumed to be connected, although not completely.\\
Next, we introduce definitions and recall results from literature that will be exploited in the manuscript to establish our main results.
\begin{definition}
\label{conn}
(Graph Connectivity)~\cite{chartrand2006introduction}.
An undirected graph $\mathcal{G}_N$ is said to be \textit{connected} if there exists a path between any two vertices.
In addition, if there exists a path from any vertex to any other vertex, the $\mathcal{G}_N$ is said to be \textit{completely connected}.
\end{definition}

\begin{definition}
\label{sig}(Sig Function)~\cite{lu2013finite,bhat2000finite}. Let
\begin{equation}\label{sign_scalare}
sig(x)^{\alpha}=sign(x)\vert x\vert^{\alpha}    
\end{equation}
where $\alpha>0$, $x \in \mathbb{R}$ and $sign(\cdot)$ is the signum function.
\end{definition}

\noindent
Furthermore, for $\alpha>0$ and $x \in \mathbb{R} \setminus \{0\}$ the following properties~\cite{sun2016finite},

\begin{subequations}\label{inequality_sig}
\begin{align}
\frac{\partial sig(x)^{\alpha}}{\partial x}= \alpha \vert x \vert^{\alpha-1} \label{inequality_sig_1} \\
\frac{\partial \vert x\vert^{\alpha}}{\partial x}= \alpha sig(x)^{\alpha-1}, \label{inequality_sig_2}
\end{align}
\end{subequations}
hold. \\
\noindent
Finally, we recall the following finite time Lyapunov Theorem \cite{bhat2000finite}.

\begin{theorem}
\label{finite_time_theorem}
Consider the system $\dot{x}=f(x)$, where $x \in \mathbb{R}^{n}$, $f: U \rightarrow \mathbb{R}^{n}$ is a continuous function on an open neighborhood $U \subseteq \mathbb{R}^{n}$ of the origin and $f(0)=0$.
Suppose there exists a continuous positive definite $V(x): U \rightarrow \mathbb{R}$, a real number $c >0$ and $\alpha \in (0;1)$ and an open neighborhood $U_{0} \subset U$ of the origin such that $\dot{V}(x)+ c (V(x))^{\alpha} \leq 0$, $x \in U_{0} \setminus {0}$. Then $V(x)$ approaches to $0$ in finite time $T$ with 
\begin{equation}
\label{settling_time_function}
T \leq \frac{(V(x(0)))^{1-\alpha}}{c(1-\alpha)}.
\end{equation}
\end{theorem}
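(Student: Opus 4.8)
The plan is to reduce the multidimensional flow to a scalar differential inequality for the composite function $v(t) = V(x(t))$ and then integrate it explicitly. First I would observe that the hypothesis $\dot V(x) + c(V(x))^{\alpha} \le 0$ means that along any trajectory $x(t)$ remaining in $U_{0}\setminus\{0\}$ the scalar $v(t)$ satisfies $\dot v(t) \le -c\,v(t)^{\alpha}$. Before exploiting this, I must guarantee that the trajectory stays in the region where the inequality holds: since $\dot v \le 0$, the function $v$ is non-increasing, so $x(t)$ remains inside the sublevel set $\{\,V \le V(x(0))\,\}$, which for $x(0)$ close enough to the origin is contained in $U_{0}$. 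This is the standard Lyapunov-stability argument and also shows the origin is stable; combined with positive definiteness of $V$ it identifies $v(t)=0$ with $x(t)=0$, the equilibrium guaranteed by $f(0)=0$.

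The core computation is a separation of variables. As long as $v(t)>0$ I would differentiate the auxiliary quantity $v^{1-\alpha}$, obtaining
\begin{equation*}
\frac{d}{dt}\,v(t)^{1-\alpha} = (1-\alpha)\,v(t)^{-\alpha}\,\dot v(t) \le (1-\alpha)\,v(t)^{-\alpha}\bigl(-c\,v(t)^{\alpha}\bigr) = -c(1-\alpha),
\end{equation*}
where the sense of the inequality is preserved because $1-\alpha>0$ and $v^{-\alpha}>0$. Integrating this constant bound from $0$ to $t$ yields
\begin{equation*}
v(t)^{1-\alpha} \le v(0)^{1-\alpha} - c(1-\alpha)\,t .
\end{equation*}
Since the left-hand side is non-negative, the right-hand side cannot remain positive beyond $t = T^{\ast} := (V(x(0)))^{1-\alpha}/\bigl(c(1-\alpha)\bigr)$; hence there is a finite time $T \le T^{\ast}$ at which $v(T)=0$, i.e.\ $x(T)=0$. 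This delivers both finite-time convergence and the claimed settling-time bound \eqref{settling_time_function}.

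The step I expect to require the most care is the rigorous interpretation of $\dot V$ together with the comparison argument, because the theorem allows $f$ to be merely continuous; indeed finite-time stability forces $f$ to fail Lipschitz continuity at the origin, so solutions need not be unique and $V$ need not be everywhere differentiable. I would handle this by reading $\dot V$ as the upper right Dini derivative of $v$ along the flow and invoking a comparison principle for the scalar equation $\dot w = -c\,w^{\alpha}$: its solution starting from $w(0)=v(0)$ reaches zero exactly at $T^{\ast}$, and the differential inequality gives $v(t)\le w(t)$ for all admissible $t$. Additional care is needed at the instant $v$ first hits zero, to confirm that the bound is attained rather than merely approached and that $v$ cannot cross into negative values; positive definiteness of $V$ forbids negative values, which closes this last gap.
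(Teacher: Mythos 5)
Your proof is correct. Note that the paper does not actually prove this statement: Theorem~\ref{finite_time_theorem} is recalled verbatim from the cited literature (Bhat and Bernstein's work on finite-time stability) as a preliminary tool, so there is no in-paper proof to compare against. Your argument --- reducing to the scalar differential inequality $\dot v \leq -c\,v^{\alpha}$, integrating $\tfrac{d}{dt}v^{1-\alpha} \leq -c(1-\alpha)$ by separation of variables, and invoking a comparison principle with Dini derivatives to handle the merely continuous (necessarily non-Lipschitz) vector field --- is precisely the canonical proof given in that reference, including the correct identification of the delicate points (non-uniqueness of solutions, confinement of the trajectory to a sublevel set inside $U_{0}$, and behavior at the instant $v$ first reaches zero).
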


\section{Formulation of the Cooperative Intersection Crossing Problem}\label{sec:problemstatmenet}
Consider $N$ vehicles approaching a generic traffic junction from $\mu$ different two-lane roads, with no traffic lights or any other kind of signalling provided by an infrastructure acting as central arbiter. All vehicles have to overpass the intersection while avoiding collisions and minimizing the crossing time (virtually, even with no need for a stop).
In Cooperative Intersection Crossing (CIC) problem, vehicles are also assumed to be connected via Vehicle-to-Vehicle (V2V) communication in order to share information about their own trajectory and their local state (e.g., see \cite{coelingh2012all} and references therein).
Hence, the practical implementation of a CIC strategy is heavily based on a reliable V2V communication network in the urban areas for guaranteeing the smooth and safe crossing of the vehicles through the intersection. 
Nowadays this is feasible by leveraging on-board modems, that make vehicles able to share their data across a urban cellular network (see \Cref{tab:wifivs5g} where the main cellular features are compared to ones of the vehicular Wi-fi network based on the IEEE 802.11p protocol).
\\
Given a generic intersection, we define its central polygonal zone as the \emph{Conflicting Area (CA)}, i.e., the part of the intersection where collisions could occur, while the larger circular zone around the CA, with radius $r_{cz}$, is referred to as the \emph{Cooperation Zone (CZ)}, i.e., the zone where vehicles interact (see \cref{fig:incrociogenerico}).
The objective of the CIC is that each vehicle in the CZ autonomously regulates its motion, cooperating with its neighbors, to occupy the CA in a mutually exclusive fashion, without side and rear-end collisions \cite{wu2015distributed}. Namely, at any time instant at most one vehicle is allowed to drive without stopping within the CA. Note that the traffic flow at the intersection may be continuously. However, for a specific time interval, we only need to consider a restricted group of $N$ vehicles that are approaching the junction \cite{li2006cooperative}. Under this assumption, as shown in Fig. \ref{fig:incrociogenerico}, vehicles inside the CZ will be considered as the group that currently takes part in cooperative crossing, whereas vehicles outside the CZ will be postponed to the next negotiation slot.\\
From a control perspective, it is assumed that the path-following is ensured by 
a lower-level path follower, while the vehicles velocity and the safe spacing among vehicles is automatically achieved via a cooperative control based on the {\em virtual platoon} concept \cite{medina2017cooperative}.
In other words, the two-dimensional intersection problem is simplified into a one-dimensional {\em virtual} platoon control problem (as shown in Fig. \ref{fig:Ordinamento}). The crossing sequence is negotiated among the connected vehicles based on their actual distance from the intersection center, 
which is mapped into a crossing order, i.e. the closest vehicle goes first.
\begin{figure}[t]
\centering
\includegraphics[width=0.5\columnwidth]{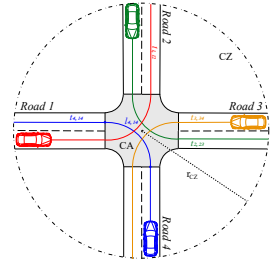}
\caption{A possible traffic junction scenario ($\mu$ = 4). Self-driving connected cars cooperate for crossing the Conflicting Area (CA). Once inside the Cooperative Zone (CZ), the vehicle $i$ may choose one of the possible trajectories $t_{i,pq}$ starting from the road $p$ where it is initially located.}
\vspace{-10pt}
\label{fig:incrociogenerico}
\end{figure}
\begin{figure}[t]
\centering
\subfloat[]{
\includegraphics[width=0.75\columnwidth]{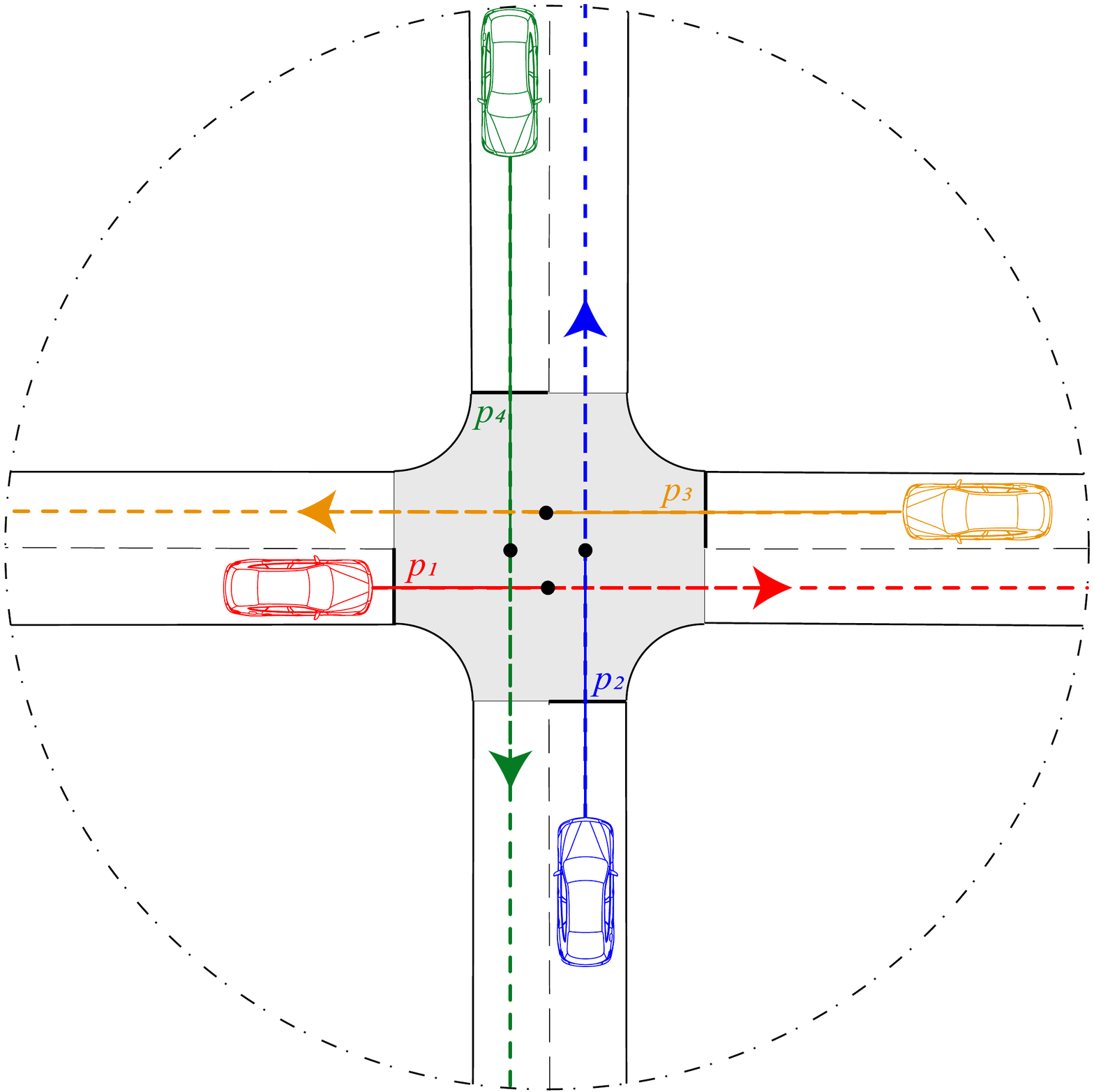}\label{fig:incrocioOrdineIngresso}}\\
\vspace{-10pt}
\subfloat[]{ 
\includegraphics[width=0.75\columnwidth]{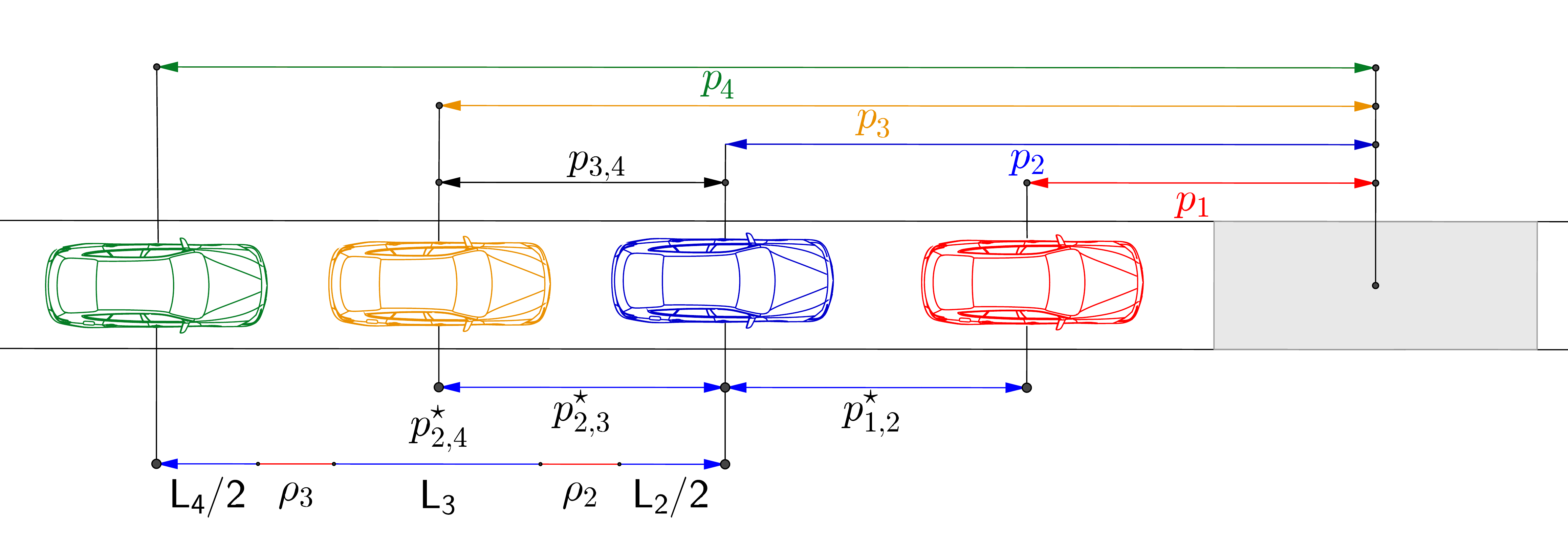}\label{fig:VirtualPlatoon}
}
\caption{CIC. a): autonomous vehicles approaching the traffic junction  b): recast into a virtual platoon problem (on the base of the position from the centre $p_{i}(t)$.
}
\vspace{-15pt}
\label{fig:Ordinamento}
\end{figure}
Since side and rear-end collision must be avoided, a desired spacing policy has to be imposed within the virtual formation, i.e., vehicles have to reach and maintain pre-fixed inter-vehicular gaps as they move with a common velocity. Specifically, the desired distances among virtual platoon members, say $p_{ij}^{\star}$ ($ \forall (i,j) \in \mathcal{E}_{N}$), have to be selected so to ensure that real vehicles access exclusively the CA, while the achievement of a common velocity guarantees that the desired formation will be preserved once reached. 
It is important to highlight that collisions are prevented only if the cooperative algorithms guarantee the achievement of the desired virtual formation in a prescribed finite time $T$ before the first vehicle enters into the CA.\\
Now the CIC problem can be stated as follows. Let the following dynamic behavior for each vehicle within the CZ:
\begin{equation}
\begin{array}{cc}
\dot p_i(t)=v_i(t)\\
\dot v_i(t)=u_i(t), 
\end{array}
\label{eq:LongitudinalModel}
\end{equation}
being $p_i(t)$ the position of each vehicle $i$, expressed as its distance from the center of its trajectory $t_{i,qg}$ (linking the road $q$, where the $i$-th vehicle is initially located, with the road $g$, where the $i$-th vehicle is heading to, as shown in Fig. \ref{fig:Ordinamento}), and $v_i(t)$ its velocity. The cooperative control 
problem can be formulated as follows:
\begin{problem}\label{problem}({\em CIC -- Cooperative Intersection Crossing -- in finite time}). Given the virtual platoon, obtained by organizing the $N$ vehicles within the CZ in ascending order of distances from the center of their trajectories $p_i(t)$ ($\forall i \in \mathcal{V}_N$), find a distributed cooperative control protocol $u_i(t)$ such that $\forall (i,j) \in \mathcal{E}_N$ the achievement of the following desired formation is guaranteed in a finite-time $T$:
\begin{equation}
\begin{array}{c}
\vert p_{i}-p_{j} \vert \rightarrow p_{ij}^{\star} \\
\vert v_{i}-v_{j} \vert \rightarrow 0 \\ 
\end{array}
\end{equation}
being $p_{ij}^{\star}= r_{ij} + hv_i$ the safe virtual inter-vehicular gaps where $r_{ij}$ is the stand-still distance between the vehicle $i$ and the vehicle $j$, $h$ is the headway time, and $v_i$ is the velocity of the $i$-th vehicle.
\end{problem}
    ~ 
\section{Design of the Finite-Time Distributed Cooperative Control for CIC} \label{sec:ControlProtocol}
In order to solve Problem \ref{problem},
here we design a distributed control strategy relying on communication with the neighbouring vehicles. The choice of a distributed approach sharply reduces  the  computational  load  on  the  remote endpoint and hence is more  efficient  from  a  computational  point  of  view. Also it can easily and quickly scale to an increasing number of  vehicles  approaching  the  intersection.
The distributed nonlinear control law for each vehicle $i$ is given as:
\begin{equation}\label{control}
\begin{array}{l}
u_{i}(t)= -\sum\limits_{j=1}^{N} a_{ij} sig(p_{i}(t)-p_{j}(t)-p_{ij}^{\star})^{\frac{2\alpha}{1+\alpha}}\\
\qquad \qquad - \sum\limits_{j=1}^{N} a_{ij} sig(v_{i}(t)-v_{j}(t))^{\alpha},
\end{array}
\end{equation}
where $\alpha \in (0;1)$ and $sig(\cdot)$ is defined as in \cref{sig}. Moreover, $a_{ij}$ models the topology of the underlying connected communication graph $\mathcal{G}_N$, i.e., the presence/absence of a communication link between the $i$-th and $j$-th vehicle ($a_{ij} = 0 \; \forall j \notin \mathcal{N}_{i}$ as reported in \cref{preliminaries}).
Note that the controller is distributed in the sense that each agent requires only relative position and velocity measurements of its neighboring agents.
\subsection{Finite-Time Stability Analysis of the Closed-loop network}
Given (\ref{eq:LongitudinalModel}) and (\ref{control}), the closed-loop dynamics for the $i$-th vehicle can be derived ($\forall i \in \mathcal{V}_N$) as:
\begin{subequations}\label{closed-loop_i}
\begin{align}
\dot{p}_{i}(t)&=v_{i}(t)\label{closed-loop_i1} \\
\dot{v}_{i}(t)&= -\sum\limits_{j=1}^{N} a_{ij} sig(e_{ij}(t))^{\frac{2\alpha}{1+\alpha}} - \sum\limits_{j=1}^{N} a_{ij} sig(v_{i}(t)-v_{j}(t))^{\alpha},\label{closed-loop_i2}
\end{align}   
\end{subequations}
where $e_{ij}(t)=p_{i}(t)-p_{j}(t)-p_{ij}^{\star}$ is the distance error between vehicle $i$ and vehicle $j$ according to the desired spacing $p_{ij}^{\star}$.\\
We next establish a finite-time stability result. 
\begin{theorem}
Consider $N$ self-driving vehicles, sharing information via V2V communication, with closed-loop longitudinal dynamics as in (\ref{closed-loop_i}). If the corresponding communication graph $\mathcal{G}_{N}$ is connected in the CZ, then the control strategy $u_{i}(t)$ in (\ref{control}) solves Problem \ref{problem}, i.e., it ensures that vehicles converge to the desired distance with a common velocity in a finite time $T$.
\end{theorem}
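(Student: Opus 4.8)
The plan is to recast the closed loop (\ref{closed-loop_i}) in the edge--error coordinates $e_{ij}(t)=p_i(t)-p_j(t)-p_{ij}^{\star}$ together with the velocities $v_i(t)$, and to exhibit a Lyapunov function that splits into a \emph{potential} part built from the position errors and a \emph{kinetic} part built from the velocities. Concretely, I would take
\begin{equation}
V = \frac{1}{2}\sum_{i=1}^{N}\sum_{j=1}^{N} a_{ij}\,\Phi(e_{ij}) + \frac{1}{2}\sum_{i=1}^{N} v_i^{2},
\qquad \Phi(e)=\frac{1+\alpha}{1+3\alpha}\,|e|^{\frac{1+3\alpha}{1+\alpha}},
\end{equation}
so that, by (\ref{inequality_sig_2}), $\Phi'(e)=sig(e)^{\frac{2\alpha}{1+\alpha}}$ reproduces exactly the position term of the controller (\ref{control}). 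Here I assume the natural consistency $p_{ij}^{\star}=-p_{ji}^{\star}$ on the prescribed gaps, so that $e_{ji}=-e_{ij}$, and I treat $p_{ij}^{\star}$ as a constant target in the differentiation (the headway contribution $hv_i$ is reconciled by the common-velocity condition at the formation).

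The first computational step is to show that $V$ is non--increasing. Differentiating along (\ref{closed-loop_i}) and using $\dot e_{ij}=v_i-v_j$, the undirected symmetry $a_{ij}=a_{ji}$ and the oddness of $sig(\cdot)$ allow the double sum to be folded, giving $\tfrac{1}{2}\sum_{i,j}a_{ij}\,sig(e_{ij})^{\frac{2\alpha}{1+\alpha}}(v_i-v_j)=\sum_{i,j}a_{ij}\,sig(e_{ij})^{\frac{2\alpha}{1+\alpha}}v_i$. Substituting (\ref{closed-loop_i2}) into $\sum_i v_i\dot v_i$, the potential cross terms cancel this contribution exactly, and what survives is
\begin{equation}
\dot V = -\frac{1}{2}\sum_{i=1}^{N}\sum_{j=1}^{N} a_{ij}\,|v_i-v_j|^{1+\alpha}\le 0,
\end{equation}
where I used $sig(x)^{\alpha}x=|x|^{1+\alpha}$. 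This certifies Lyapunov stability and boundedness of the trajectories.

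The main obstacle is that $\dot V$ is only negative \emph{semi}--definite: it vanishes on the entire manifold $\{v_1=\dots=v_N\}$ irrespective of the position errors, so Theorem \ref{finite_time_theorem} cannot be applied to $V$ as it stands. I would close this gap in two stages. First, an invariance (LaSalle) argument: on $\dot V=0$ connectivity of $\mathcal{G}_N$ forces a common velocity, whence $\dot v_i=0$ and the balance condition $\sum_j a_{ij}\,sig(e_{ij})^{\frac{2\alpha}{1+\alpha}}=0$ for every $i$; exploiting connectivity once more, the only formation compatible with this balance is $e_{ij}=0$, so the desired set is asymptotically attractive. Second, to promote asymptotic convergence to \emph{finite--time} convergence I would either (i) verify that the closed loop is homogeneous of negative degree with respect to a suitable dilation on the position/velocity coordinates and invoke the standard homogeneity principle for finite-time stability, or (ii) build a \emph{strict} Lyapunov function $V_\varepsilon=V+\varepsilon\sum_i v_i\,\chi_i(e)$ with a small cross term that renders $\dot V_\varepsilon$ negative definite in both errors and yields an inequality of the form $\dot V_\varepsilon + c\,V_\varepsilon^{\,\beta}\le 0$ with $\beta\in(0,1)$.

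Once such a differential inequality is in hand, Theorem \ref{finite_time_theorem} delivers the settling--time estimate $T\le \dfrac{V_\varepsilon(x(0))^{1-\beta}}{c(1-\beta)}$, and translating back to the original variables gives $\lvert p_i-p_j\rvert\to p_{ij}^{\star}$ and $\lvert v_i-v_j\rvert\to 0$ in finite time for every $(i,j)\in\mathcal{E}_N$, which is exactly the formation required by Problem \ref{problem} and guarantees mutually exclusive occupancy of the CA. I expect stage two to be the crux: the semi--definiteness of $\dot V$ means the finite--time claim does not follow from the energy estimate alone, and the technical work lies in the homogeneity/strict--Lyapunov construction and in the invariant--set analysis needed to rule out residual position error on the velocity--consensus manifold.
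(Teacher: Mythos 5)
Your first three steps coincide, almost symbol for symbol, with the paper's own proof: your $\Phi$ is exactly the integral term in (\ref{candidata}) (indeed $\int_0^{e} sig(s)^{\frac{2\alpha}{1+\alpha}}\,ds = \tfrac{1+\alpha}{1+3\alpha}\vert e\vert^{\frac{1+3\alpha}{1+\alpha}}$), your dissipation identity $\dot V = -\tfrac12\sum_{i,j}a_{ij}\vert v_i-v_j\vert^{1+\alpha}$ is (\ref{derivata_candidata_recast}), and your LaSalle/connectivity argument yielding velocity consensus and then $e_{ij}=0$ is the paper's asymptotic part. The genuine gap is your ``stage two,'' which you explicitly leave open --- and that stage \emph{is} the content of the theorem, since the claim is finite-time, not merely asymptotic, convergence. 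Moreover, your preferred route (i) cannot work for this particular controller: a double-integrator network with position exponent $q_1$ and velocity exponent $q_2$ admits a dilation making the closed loop homogeneous only if $q_2 = \tfrac{2q_1}{1+q_1}$; here $q_1=\tfrac{2\alpha}{1+\alpha}$ and $q_2=\alpha$, and imposing $\alpha = \tfrac{2q_1}{1+q_1} = \tfrac{4\alpha}{1+3\alpha}$ forces $\alpha=1$, which is excluded. The exponents in (\ref{control}) are the \emph{reverse} of the homogeneous (Wang--Hong) pairing, so no dilation renders the vector field homogeneous of negative degree and the standard homogeneity principle for finite-time stability is simply unavailable. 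Route (ii) is a legitimate technique, but you supply neither the cross term $\chi_i$ nor the resulting differential inequality, so nothing beyond asymptotic convergence is actually proved.

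For comparison, the paper closes this step by a different device: homogeneity of the \emph{functions} $V$ and $\dot V$ rather than of the vector field, i.e.\ properties (\ref{homogeneity1})--(\ref{homogeneity2}), followed by the rescaling $\mu = V^{-1/2}$ onto the level set $\Upsilon=\{V=1\}$, the bound $\dot V \le -c\,V^{\frac{1+\alpha}{2}}$ with $c=-\max_{\Upsilon}\dot V$, and an appeal to Theorem \ref{finite_time_theorem}. Be aware, though, that the obstacle you flagged resurfaces exactly there: positivity of $c$ is justified in the paper by the assertion (\ref{compattoV0_final}) that $\dot V$ vanishes only at the origin, whereas $\dot V$ in (\ref{derivata_candidata_recast}) vanishes on the entire consensus set $\{v_1=\dots=v_N\}$, which intersects $\Upsilon$ (take $v=0$ and $e\neq 0$ scaled so that the potential part equals $1$); in addition, under the stated dilation the potential term of $V$ scales as $\mu^{\frac{1+3\alpha}{\alpha}}\neq\mu^{2}$, so (\ref{homogeneity1}) itself would require the dilation $e\mapsto\mu^{\frac{2(1+\alpha)}{1+3\alpha}}e$ instead. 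In short, your instinct that the semidefiniteness of $\dot V$ is the crux is sound and points at the most delicate step of the published argument as well --- but your proposal leaves that crux unresolved, and of the two escape routes you name, one is inapplicable to this controller and the other is unexecuted.
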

\begin{proof}
In order to solve our specific crossing problem, we propose the following Lyapunov function candidate
\begin{equation}\label{candidata}
\begin{array}{c}
 V(e_{ij}(t),v_i(t))=\sum\limits_{i=1}^{N} V_{i}
\end{array}
\end{equation}
where
$$
V_{i}=\sum\limits_{j=1}^{N} \int_{0}^{e_{ij}(t)} a_{ij} sig(s)^{\frac{2\alpha}{1+\alpha}} {\rm ds} + \frac{1}{2} v_{i}^{2}(t),
$$
which is positive definite, w.r.t. $e_{ij}(t)$ and $v_{i}(t)$ $\forall i,j=1, \cdots,N$, $i \neq j$. Note that this can be easily shown leveraging properties (\ref{inequality_sig}).\\
Differentiating the Lyapunov function along the trajectories $p_i(t)$ and $v_i(t)$, solutions of system (\ref{closed-loop_i}), it follows
\begin{equation}\label{derivata_candidata_zero}
\begin{array}{l}
\dot{V}(e_{ij}(t),v_i(t)) = \sum\limits_{i=1}^{N} \sum\limits_{j=1}^{N} a_{ij} sig(e_{ij}(t))^{\frac{2\alpha}{1+\alpha}} \dot{p}_{i}(t)\\
\qquad \qquad \qquad 
+\sum\limits_{i=1}^{N} v_{i}(t) \dot{v}_{i}(t), 
\end{array}
\end{equation}
and from (\ref{closed-loop_i})
\begin{equation}\label{derivata_candidata}
\begin{array}{l}
\dot{V}(e_{ij}(t),v_i(t)) = \sum\limits_{i=1}^{N} \sum\limits_{j=1}^{N} a_{ij} sig(e_{ij}(t))^{\frac{2\alpha}{1+\alpha}} v_{i}(t)\\
+\sum\limits_{i=1}^{N} v_{i}(t) u_{i}(t)
= \sum\limits_{i=1}^{N} \sum\limits_{j=1}^{N} a_{ij} sig(e_{ij}(t))^{\frac{2\alpha}{1+\alpha}} v_{i}(t)+\\
\sum\limits_{i=1}^{N} v_{i}(t) \Big(\sum\limits_{j=1}^{N} a_{ij} sig(e_{ij}(t))^{\frac{2\alpha}{1+\alpha}} 
- \sum\limits_{j=1}^{N} a_{ij} sig(v_{i}(t)-v_{j}(t))^{\alpha} \Big) \\
= -\sum\limits_{i=1}^{N} \sum\limits_{j=1}^{N} v_{i}(t) a_{ij} sig(v_{i}(t)-v_{j}(t))^{\alpha}.
\end{array}
\end{equation}
Since $sig(\cdot)$ is an odd function, while the adjacency matrix A (defined in section \ref{preliminaries}) is symmetric under the assumption of connected undirect graph $\mathcal{G}_{N}$, it follows that (\ref{derivata_candidata}) can be recast as
\begin{equation}\label{derivata_candidata_recast}
\begin{array}{l}
\dot{V}(e_{ij}(t),v_i(t)) = \sum\limits_{i=1}^{N} \sum\limits_{j=1}^{N} v_{i}(t) a_{ij} sig(v_{j}(t)-v_{i}(t))^{\alpha}\\
= \frac{1}{2}\sum\limits_{i=1}^{N} \sum\limits_{j=1}^{N}v_{i}(t)a_{ij}sig(v_{j}(t)-v_{i}(t))^{\alpha}\\ 
+ \frac{1}{2}\sum\limits_{i=1}^{N} \sum\limits_{j=1}^{N} v_{j}(t) a_{ij} sig(v_{i}(t)-v_{j}(t))^{\alpha}\\
= \frac{1}{2}\sum\limits_{i=1}^{N} \sum\limits_{j=1}^{N} (v_{i}(t)-v_{j}(t)) a_{ij} sig(v_{j}(t)-v_{i}(t))^{\alpha}\\
= -\frac{1}{2}\sum\limits_{i=1}^{N} \sum\limits_{j=1}^{N} (v_{i}(t)-v_{j}(t)) a_{ij} sig(v_{i}(t)-v_{j}(t))^{\alpha}\\
 = -\frac{1}{2}\sum\limits_{i=1}^{N} \sum\limits_{j=1}^{N} a_{ij}\vert v_{i}(t)-v_{j}(t) \vert^{1+\alpha}.
\end{array}
\end{equation}
Let now introduce, for sake of brevity, a more compact notation for the distance errors by indicating each couple of indices $({i,j}) \in \mathcal{E}_{N}$ with a new index $\rho$. In so doing, errors are referred as elements of the following set $e_\rho(t) \in \{e_{ij}(t): i,j=1,\dots,N; i \neq j \}$ for $\rho =1, \dots, m$, being $m=\vert \mathcal{E}_{N} \vert$, i.e., being $m$ equal to the cardinality of the edge set (according to the nomenclature in \cref{preliminaries}).\\
Now it is possible to define the following distance error vector as $e(t)=[e_{1}(t), e_{2}(t), \cdots, e_{m}(t)]^\top$, while the velocity vector is $v(t)=[v_{1}(t), v_{2}(t), \cdots, v_{N}(t)]^\top$.\\
Leveraging the above notation, from (\ref{derivata_candidata_recast}) one has that $\dot{V}(e(t),v(t)) \leq 0$ and, hence, that $V(e(t),v(t))\leq V(e(0),v(0))=V_{0}$, which indicates that $e(t)$ and $v(t)$ are bounded $\forall t \geq 0$.
In addition, since the Lyapunov function $V(e(t),v(t))$ is radially unbounded \cite{slotine1991applied} (see its structure in (\ref{candidata})) it follows that the invariant set $\Omega$, defined as
\begin{equation}\label{compattoV0_omega}
\Omega=\{ e(t) \in \mathbb{R}^{m}, \; v(t) \in \mathbb{R}^{N} \; : V(e(t),v(t))\leq V_{0} \},
\end{equation}
is compact.
Thus, from the LaSalle Invariance Principle \cite{slotine1991applied} one has that all trajectories that start from $\Omega$ converge to the largest invariant set defined as
\begin{equation}\label{compattoV0_largest_invariant_set}
S=\{ e(t) \in \mathbb{R}^{m}, \; v(t) \in \mathbb{R}^{N} \; : \dot{V}(e(t),v(t))=0 \}.
\end{equation}
Note that, since the underlying undirected communication graph is connected, $\dot{V}(e(t),v(t))=0 $ implies that all vehicles velocities approach the average velocity ($i,j=1, \dots,N \;, \forall j \neq i$)
$$v_{i}(t)=v_{j}(t)=v^{\star} = \sum_{i \in \mathcal{V}_N } \frac{v_i}{N},$$ which in turn implies that at steady state $u_{i}(t)=u_{j}(t)=0$. \\
From (\ref{control}),
\begin{equation}\label{control2}
u_{i}(t)=-\sum\limits_{i=1}^{N}\sum\limits_{j=1}^{N} a_{ij} sig(p_{i}(t)-p_{j}(t)-p_{ij}^{\star})^{\frac{2\alpha}{1+\alpha}}=0
\end{equation}
implies that $sig(p_{i}(t)-p_{j}(t)-p_{ij}^{\star})^{\frac{2\alpha}{1+\alpha}}=0$, or equivalently that $p_{i}(t)-p_{j}(t)=p_{ij}^{\star}$. 
In so doing, it is proven that all vehicles asymptotically converge to the fixed desired formation configuration.\\
In the following, we will prove that the convergence of the velocity alignment, as well as the
convergence of formation stabilization, is achieved in finite time. To this aim, we leverage the homogeneity property of the Lyapunov function according to \cite{lu2013finite,bhat2000finite,sun2016finite,bhat1997finite}.\\
Given (\ref{candidata}) and (\ref{derivata_candidata_recast}), for any $\mu>0$ there holds
\begin{equation}\label{homogeneity1}
V(\mu^{\frac{\alpha+1}{\alpha}}e,\mu v)= \mu^{2}V(e,v),
\end{equation}
\begin{equation}\label{homogeneity2}
\dot{V}(\mu^{\frac{\alpha+1}{\alpha}}e,\mu v)= \mu^{1+\alpha} \dot{V}(e,v),
\end{equation}
which verifies the homogeneity properties of $V(e,v)$ and  $\dot{V}(e,v)$. Note that for the sake of simplicity, the time dependence has been omitted.\\
From (\ref{homogeneity2}), with $\mu=[V(e,v)]^{-\frac{1}{2}}$ we have
\begin{equation}\label{maggiorante}
\begin{split}
\frac{\dot{V}(e,v)}{V(e,v)^{\frac{1+\alpha}{2}}}&=\dot{V}(V(e,v)^{-\frac{\alpha+1}{2\alpha}}e,V(e,v)^{-\frac{1}{2}} v)\\
& \leq \max_{(e,v) \in \Upsilon}\dot{V}(e,v)
\end{split}
\end{equation}
where 
\begin{equation}\label{Compatto2}
\begin{split}
\Upsilon=& \Big \lbrace  e \in \mathbb{R}^{m}, \; v \in \mathbb{R}^{N} \setminus \{ (0^T,0^T)^T\}: \\
 & V(e,v)= V\Big(V(e,v)^{-\frac{\alpha+1}{2\alpha}}e,V(e,v)^{-\frac{1}{2}} v \Big) \Big\rbrace.
\end{split}
\end{equation}
From homogeneity property in (\ref{homogeneity1}), it follows
\begin{equation}\label{homogeinity_applied}
\begin{split}
V\Big(V(e,v)^{-\frac{\alpha+1}{2\alpha}}e,V(e,v)^{-\frac{1}{2}} v \Big)= \Big(V(e,v)^{-\frac{1}{2}}\Big)^2 V(e,v)=1.
\end{split}
\end{equation}
Therefore, $\Upsilon= \{e \in \mathbb{R}^{m}, \; v \in \mathbb{R}^{N} : V(e,v)=1 \}$ is a compact set due to the radially unbounded property of $V(e,v)$.
Since $\dot V (e,v)$ is continuous and non-positive on the compact set $\Upsilon$, we have 
\begin{equation}\label{maggiorante2}
\frac{\dot{V}(e,v)}{V(e,v)^{\frac{1+\alpha}{2}}} \leq \max_{(e,v) \in \Upsilon}\dot{V}(e,v)= -c 
\end{equation}
where $c \geq 0$.
Furthermore, by the fact that 
\begin{equation}\label{compattoV0_final}
\{ e(t) \in \mathbb{R}^{m}, \; v(t) \in \mathbb{R}^{N} \; : \dot{V}(e(t),v(t))=0 \}= \{(0^{\top},0^{\top})^{\top} \},
\end{equation}
one obtains $c>0$.
Therefore, condition (\ref{maggiorante2}) implies that 
\begin{equation}\label{maggiorante3}
\dot{V}(e,v) \leq -c V(e,v)^{\frac{1+\alpha}{2}}.
\end{equation}
Since $\frac{1+\alpha}{2} \in (0;1)$, from Theorem \ref{finite_time_theorem} it follows that the closed-loop system is finite time stable with settling time $T$ such that
\begin{equation}\label{estimate}
    T \leq \frac{2}{c(1-\alpha)}V(e(0),v(0))^{\frac{1-\alpha}{2}}.
\end{equation}
This completes the proof.
\end{proof}
\begin{remark}
By constructing a Lyapunov function for the closed-loop system, the settling time is estimated by computing the Lyapunov function value at the initial point according to \cite{zhao2016distributed}. 
\end{remark}
\begin{remark}
According to (\ref{estimate}), it is possible to tune the control gain $\alpha$ to select a proper upper-bound for the convergence time.  
\end{remark}
    ~ 

\section{Communication Software: the Hermes module}
\label{sec:communicationsoftware}

\LMC{In this section we discuss the design and implementation of a generalised, real-time, low-latency and reliable message exchanging system that we called \textit{Hermes}. Hermes acts as an application-level communication infrastructure to support the control algorithm discussed in Section~\ref{sec:ControlProtocol}. } Thanks to the level of abstraction that has been used in the architecture design, \textit{Hermes} can provide a generic road user with traffic information, withstanding the differences intrinsic in the specific configuration of each vehicle.\\
Moreover, even though it is reasonable to assume that each involved vehicle is able to take autonomous decisions based on the information received from the traffic controller, the latter should also provide recipients with additional information (also called \textit{control-side information} in this context) that can bias the final control decision (i.e., the one initially taken locally on the vehicle). An example scenario where such control-side information comes in handy, could be the need to prioritize a vehicle. In fact, the traffic controller might be willing to force the order of vehicles that are about to cross an intersection because a high priority vehicle (e.g., an ambulance) is approaching. \LMC{To this extent, the communication software is highly decoupled from the adopted control strategy. 
In order to be reliable, the system must be aware of the status of the connection it has established with any user. In this sense, it is possible to identify two main classes of listeners: 
\begin{enumerate*}
    \item \emph{vehicles}, either autonomous or human-driven, for which reliability must be guaranteed;
    \item  \emph{monitors}, with no specific reliability requirements.
\end{enumerate*}}
Indeed, while vehicles are supposed to proactively leverage the data they get from the traffic controller, monitors are just passively listening to control data in order to, e.g., assess the performance of the overall system. They hence demand for less stringent requirements in terms of reliability.


The entire communication system has been designed to be easily deployable and highly scalable, so to seamlessly cope with an increasing number of vehicles. As to the \textit{traffic manager}, it has been developed using state of the art software and well-known programming methodologies. As it will come out from the next sections, it is easily extendable and maintainable. 

\LMC{Packets exchanged by the communication software have been shaped to cope with data sensed by the following instrumentation, deployed on our test vehicles:}
\begin{enumerate*}[label=\alph*]
    \item High precision \emph{GNSS system}; 
    \item \emph{pre-5G LTE client modem} to exchange messages with other vehicles through a $5$G network;
    \item \emph{Proxy} acting as interface between the vehicle integrated hardware and the \textit{Hermes} module; 
    \item \emph{Inertial Measurement Unit (IMU)} providing information such as acceleration of the vehicle.
\end{enumerate*}

The configuration of the system is summarized in Figure \ref{fig:pp_block_diagram}.  


\subsection{System domain model}

Hermes is able to manage the traffic among several road users. In order to do that, the system monitors the status of mobile nodes within a desired area, constantly receiving status messages and \LMC{providing clients with network updates}. 
The core element of the system is the \textit{Mobile Node}, a virtual representation of a subscribed road user. Each \textit{Mobile Node} element, which is univocally identified by a combination of \textit{id} and \textit{name}, stores status information such as \textit{position, speed} and \textit{acceleration} of the corresponding physical node in the \textit{Status} fields. Eventually, the \textit{Timestamp} field is updated with the time value of the last received message. This proves useful when detecting possible communication and vehicular anomalies. Hence, the traffic manager stores the virtual representation of each subscribed node in a subscription list, along with control information such as a \textit{Sequence Number} and a \textit{Global Timestamp}. 


\subsection{System dynamics} \label{sec:entities}
When a node is activated, it is requested to send a subscription request to the traffic manager in order to have its status tracked over the network. Once the subscription has been submitted and approved by the manager, the node in question can start sharing status updates by sending information such as current position, speed and acceleration, along with the above mentioned control data (such as a timestamp and a local sequence number). The traffic manager, in turn, periodically 
sends the latest updates via multicast to connected nodes, making each of them aware of the status of all vehicles passing through the covered area. Thus, from a high-level point of view, the traffic manager acts as a mean to share `all to all' connection information among road users.\\
It is important to highlight a difference between \emph{local} and \emph{global} sequence number. The former is calculated locally to each car and belongs to the vehicle status data structure. It is used to distinguish between two packets originating from the same mobile node. The latter originates on the server and it is associated with a \textit{traffic update} packet sent by the traffic manager. Two packets with different global sequence number have both origin in the traffic manager server and are born in two different moments. In particular, the packet with the highest sequence number is the newest one. 

%

In order to exchange their own status, road users leverage a dedicated class of messages, 
referred to as a \textit{to5GPoC}  message in this work. 


Since the designed architecture envisages the presence of different types of vehicles, each such vehicle firstly has to declare its own nature (type of vehicle and type of aid in guide) by properly filling in the \textit{Vehicle Type} field of a \textit{to5GPoC} message. Also, this architecture uses the field \textit{Vehicle Name} to distinguish between two different vehicles. In a real scenario, this field has been thought to be filled in with unique information such as plate number or VIN (Vehicle Identification Number). The following data are also included in this kind of message: a) \textit{GNSS Coordinates}: current Position of the vehicle according to the WGS84 standard \cite{wgs84}; b) \textit{GNSS Heading}: heading angle of the running vehicle; c) \textit{Vehicle Speed}: current Speed of the vehicle. This field is composed of two values, namely the \textit{latitudinal speed} and the \textit{longitudinal speed}; d) \textit{Proximity}: here the vehicle declares its distance in meters from the intersection it is approaching and that is supposed to be autonomously negotiated. If the traffic manager has to handle more than one intersection, it can mix this information with data coming from the GNSS to figure out which specific intersection the vehicle is approaching; e) \textit{Connection Status}: this field is used to keep track of the connection status of the vehicle. The assumed value should be binary: \textit{Active} or \textit{Inactive}; f) \textit{Latency}: the latency measured across the connection between the node and the traffic manager, based on the last message received; g) \textit{Local Timestamp}: time at the mobile node when information encapsulation has taken place at application level; h) \textit{Local Sequence}: sequence number of the packet. This counter is managed by the local node and will be reset to 0 when the connection is restarted.\\
On the other side of the communication, the traffic manager is listening for updates from subscribed nodes, keeping its vehicle list up-to-date with received information. It then periodically shares the stored traffic data with road users. Messages sent from the server are classified as \textit{Traffic Update} messages and contain the following data:

%

\begin{enumerate*}[label=\alph*]
    \item \textit{Connected Nodes}: the number of mobile nodes currently subscribed to the network.
    \item \textit{Global Sequence Number}: the sequence number of the current message. It is computed at the traffic manager and keeps track of the messages sent. This counter only resets when the server is shut down.
    \item \textit{Global Timestamp}: time value extracted on the server machine when a \textit{Traffic Update} packet is encapsulated at application level.
    \item \textit{Control Side Information}: additional control data that may prove useful when there is a need to bias local decisions. This field can be used, for instance, if the manager wants one vehicle (i.e., ambulance) to be prioritized against the others, as well as in the case that stakeholders decide to switch to a more centralized control strategy.    
\end{enumerate*}

\subsection{Hermes Architecture}

\textit{Hermes} has been built by combining the simplicity of the \textit{Client-Server} pattern with the efficiency of the \textit{Multicast} communication paradigm. As already mentioned, the traffic manager plays a role which is of paramount importance in the overall architecture. It acts as a server, with the mobile nodes representing the clients. The \textit{Traffic Manager} server receives messages from clients, 
\LMC{elaborates them} and eventually shares the results among vehicles through a multicast session. 

\begin{figure}[h!]
    \centering
    \includegraphics[width=0.9\columnwidth]{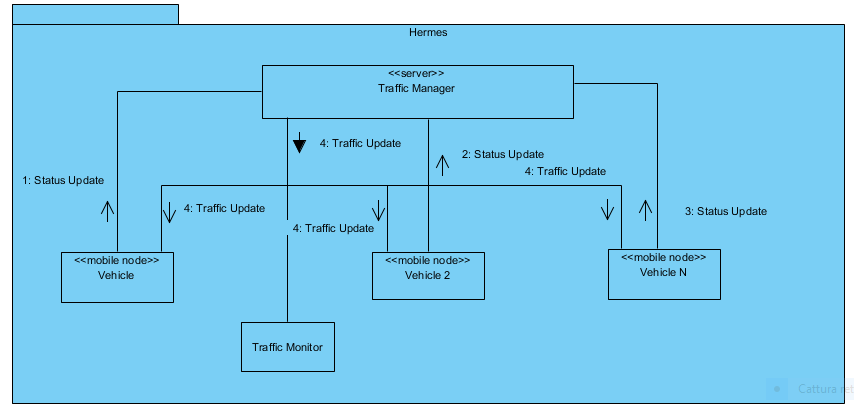}
    \caption{Hermes High Level Architecture}
    \label{fig:hermesclientserver}
        \vspace{-10pt}
\end{figure}

In order to add an additional level of reliability to the communication, messages between clients and traffic manager are exchanged over TCP rather than UDP.  With this choice we actually traded slightly decreased network responsiveness for improved communication reliability and this is justified by the critical nature of the application. In fact, even though in $5$G networks the reliability of the communication is guaranteed, at the physical layer, by URLLC~\cite{popovski2017ultra}, the further layer of reliability added by TCP does make it possible to safely use the \textit{Hermes Traffic Manager} software even in areas where $5$G coverage is not ensured and the connections are downgraded to standard LTE. It can also be noticed that, whenever $5$G coverage is already available, the overhead introduced by TCP is minimal~\cite{xylomenos1999tcp} (since a fault will cause a re-transmission at the physical layer and stay transparent to TCP) and it can be considered a fairly low price to be paid, which allows to gain portability toward classical LTE networks.

A fully-fledged version of the \textit{Hermes Traffic Manager} (that we called \emph{HermesJS}) has been implemented to carry out the experiments at AstaZero proving ground. The implementation of the service uses WebSockets~\cite{ws_ietf} as a means to exchange messages among involved entities. If the communication happens over a public network, the connection can be easily upgraded to secure WebSockets. In a nutshell, WebSockets represent an advanced technology that makes it possible to open an interactive, event-driven communication session between client and server with no need for polling to receive a reply. 
\LMC{On the server side, the \textit{socket.io} implementation of WebSockets has been used, within the context of a \textit{Node.JS} environment. The HermesJS server waits for incoming HTTP connections and upgrades them to the WebSocket protocol if they are supposed to interact with the traffic manager. This choice comes as a result of a trade-off between availability, reliability and easy prototyping. While the reliability of \textit{Node.JS} is not proved, there are several studies (such as~\cite{nodebook}) on its availability and security attributes. However, it has proved to be reliable during our trials.}
The deployed proof of concept slightly diverges from the discussed design, particularly in relation to the way multicast is implemented. Indeed, according to the standard patterns, two different connections should be used by each client to send the status and receive road-traffic information. In this sense, outgoing information should travel towards the traffic manager across a dedicated \textit{client-server} route, while incoming data are supposed to be dispatched via multicast at network level. What actually happens in the discussed implementation is that the multicast paradigm is implemented at the application level. Communication between a node and the traffic manager actually happens, for a single TCP flow, via bidirectional unicast. Consequently, from a network perspective each vehicle opens a TCP connection towards the traffic manager and uses this stream both to send and receive messages. 
In this scenario, each client negotiates a session with the \textit{Traffic Manager}, by setting up a websocket connection towards it. 
Once the connection has been established, the node engages in a subscription operation. During this phase, it sends a subscription message along with an identifier. If such an identifier has not been taken yet, the traffic manager notifies the occurred subscription by sending a positive acknowledgement. Otherwise, the client will be disconnected. 
After a successfully completed subscription, mobile nodes are able to communicate updates about their status.
In parallel, the \textit{Traffic Manager} broadcasts received information, along with optionally computed control side data, at a frequency of 20 Hz. The $20$ Hz update frequency has been chosen to strike a balance between the need for minimizing network traffic overhead on one side and that of maximizing the effectiveness of the communication on the other.

\subsection{Mobile Nodes}
In  order to enable test cars to communicate over $5$G, an additional module has been designed and developed, in accordance with the mobile node specifications. This software is highly asynchronous and is written in low level C++ code in order to allow for maximum performance. It is logically divided in two main components running in parallel in different threads:
a) \emph{Remote Sender}: a time-triggered asynchronous thread that periodically\footnote{A frequency of 20Hz has been used in the experiments} sends data about current state of the vehicle towards the $5$G PoC;
b) \emph{Remote Receiver}: an asynchronous thread triggered by an incoming message.  The main purpose of this method is the extraction of traffic data from the websocket data format and the initialization of an internal traffic data structure coherent with the car software and components.

\section{Test cars software architecture and experimental setup}
\label{sec:experimentalsetup}


The experimental 
setup consists of three vehicles (namely two cars, Volvo XC90 and Volvo S90, and one Truck, Volvo FH16) that exchange information via the pre-$5G$ communication test network provided by Ericsson. Vehicles are heterogeneous in their masses, power-trains and on-board systems. Namely, the Volvo Car XC90 and the Volvo Truck FH16 
are equipped with the open-source driving system OpenDLV \cite{berger2016open,berger2017containerized,benderius2018best}, while the Volvo Car S90, provided by the proving ground AstaZero, is equipped with an ADB Pedal Robot \cite{abd} that controls the longitudinal vehicle motion by acting on its throttle/brake pedals. The robot can be controlled through a proprietary interface that, in this context, has been accessed with the \textit{Matlab Realtime} tool.
In the following we detail the main on-board hardware devices and software components.

\subsection{Volvo Car XC90 and Volvo Truck FH16}

\begin{figure}[!t]
\centering
\subfloat[]{
\includegraphics[width=0.25\textwidth]{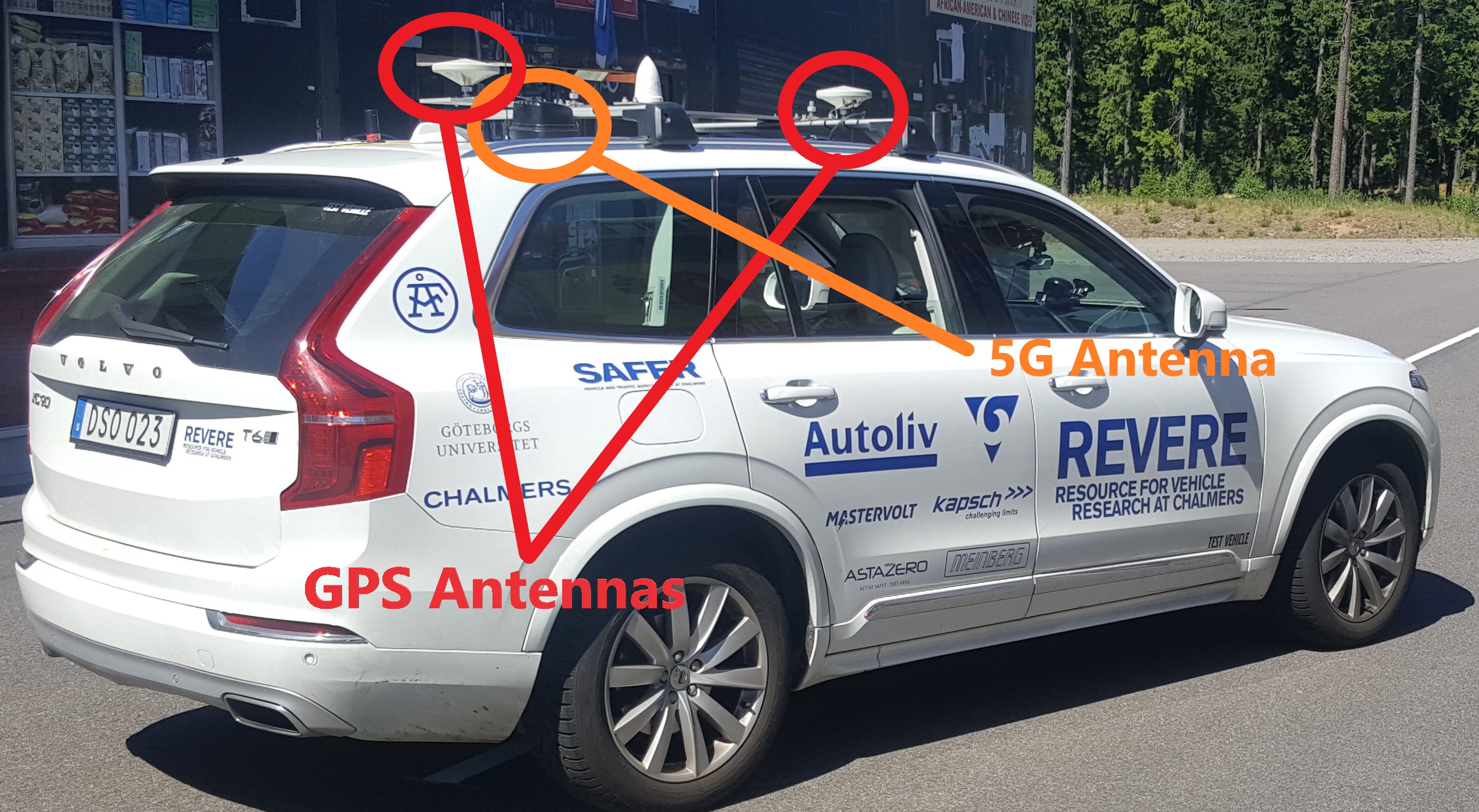}\label{fig:xc90outside}
}
\subfloat[]{ 
\includegraphics[width=0.2\textwidth]{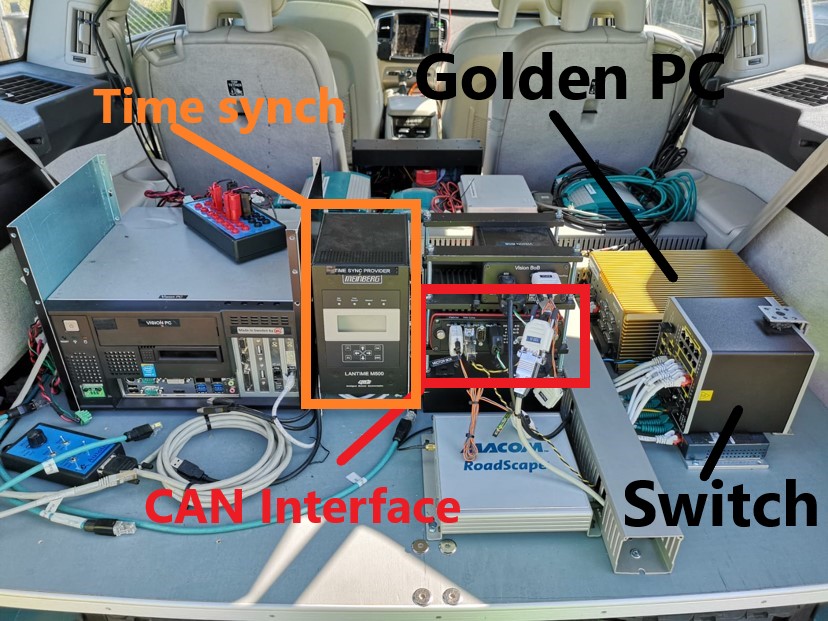}\label{fig:xc90Inside}
}
\quad
\subfloat[]{ 
\includegraphics[width=0.16\textwidth]{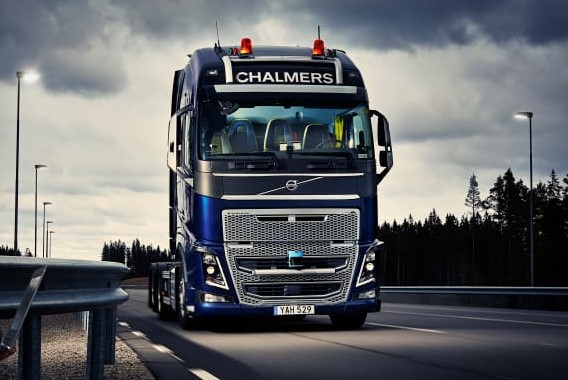}\label{fig:fh16}
} 
\subfloat[]{ 
\includegraphics[width=0.28\textwidth]{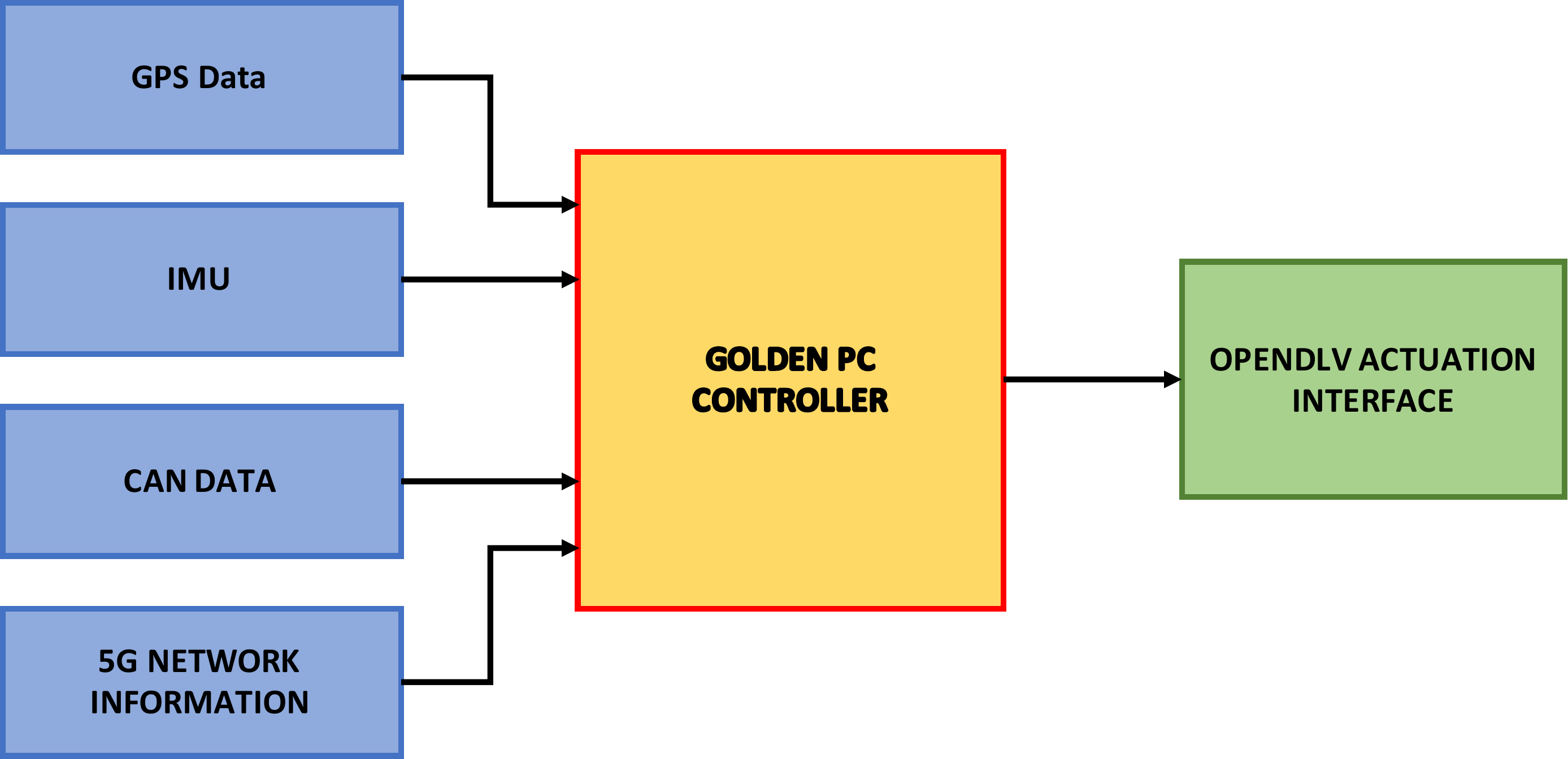}\label{architettura}
}
\caption{Experimental Setup: a) Outside  Equipment of the Volvo XC90; b) Inside Equipment of the Volvo XC90; c) Picture of the Volvo Truck FH16; d) Schematic overview of the software architecture executed on OpenDLV.}
\label{fig:testcarxc90_instrument}
\vspace{-10pt}
\end{figure}
The XC90 is equipped as follows:
\begin{enumerate*}[label=\alph*)]
    \item Applanix GNSS/INSS unit providing the car position data in GPS coordinates. This sensor is combined with a Radio modem to gain RTK corrections, thus achieving a precision up to centimeters in data position \cite{wanninger2004introduction}.
    \item Inertial Movement Unit (IMU) providing the current vehicle acceleration. Velocity measurements are obtained from the on-board commercial ECU via the CAN Interface
    \item pre-5G Telit Modem LTE+, a 5$^{th}$ generation modem  establishing the radio communication with the Ericsson test network.
    \item Roof antennas for sharing information over the Ericsson test network.
    \item  A PC running the OpenDLV (see Section~\ref{sect:OpenDLV}) software, under a GNU/Linux based operating system (ArchLinux) processing the sensors measurements and implementing the control law in equation \eqref{control}. 
    
\end{enumerate*}
All the on-board sensors and actuators are connected, through a Local Area Netowrk (LAN), to the PC and exchange data through a UDP Multicast session. 
The Actuation Interface on OpenDLV provides the appropriate commands to the powertrain controller, that finally actuates the throttle and/or brake system of the XC90 (see details of the hardware configuration in Figs. \ref{fig:xc90outside} and \ref{fig:xc90Inside}; the 
software architecture, executed on OpenDLV, is instead depicted in Fig. \ref{architettura}).
With respect to the FH16 in Fig. \ref{fig:fh16}, the on-board equipment  and the software architecture executed on OpenDLV are similar to the one of the XC90 (see Figs. \ref{fig:xc90Inside} and \ref{architettura}). Indeed, the only difference is in the GNSS/INSS unit providing position data GPS coordinates, that for the truck is the Oxford OXTS GNSS (again combined with a Radio modem for RTK corrections).\\

\subsection{Volvo Car S90}

\begin{figure}[!t]
\centering
\vspace{-10pt}
\subfloat[]{
\includegraphics[width=0.1\textwidth,height=100pt]{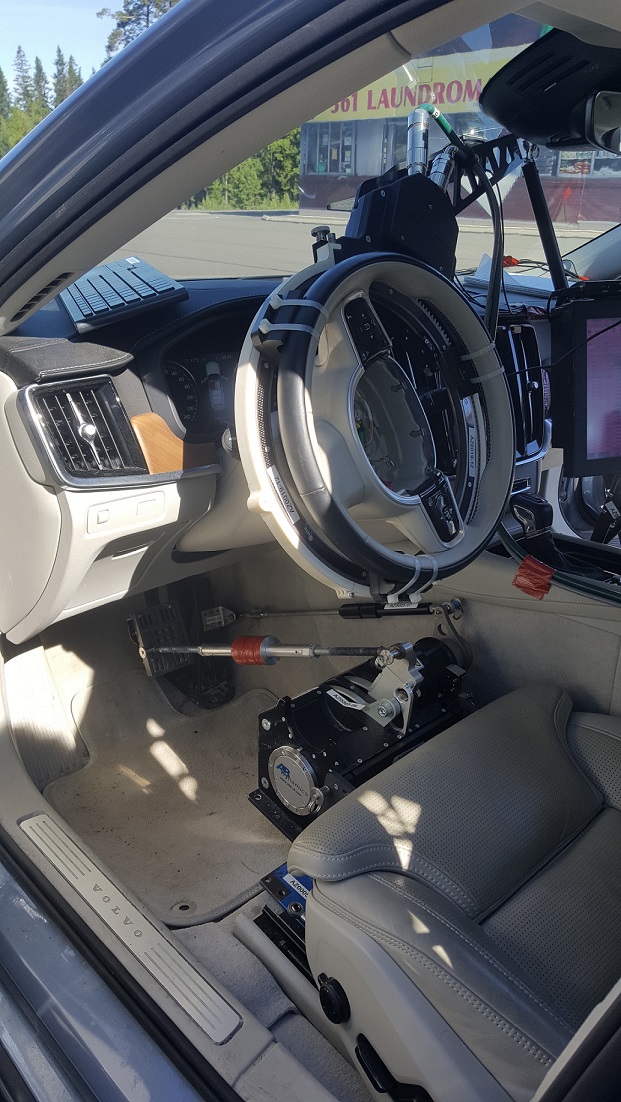}\label{fig:s90inside}
}
\subfloat[]{ 
\includegraphics[width=0.3\textwidth,height=100pt]{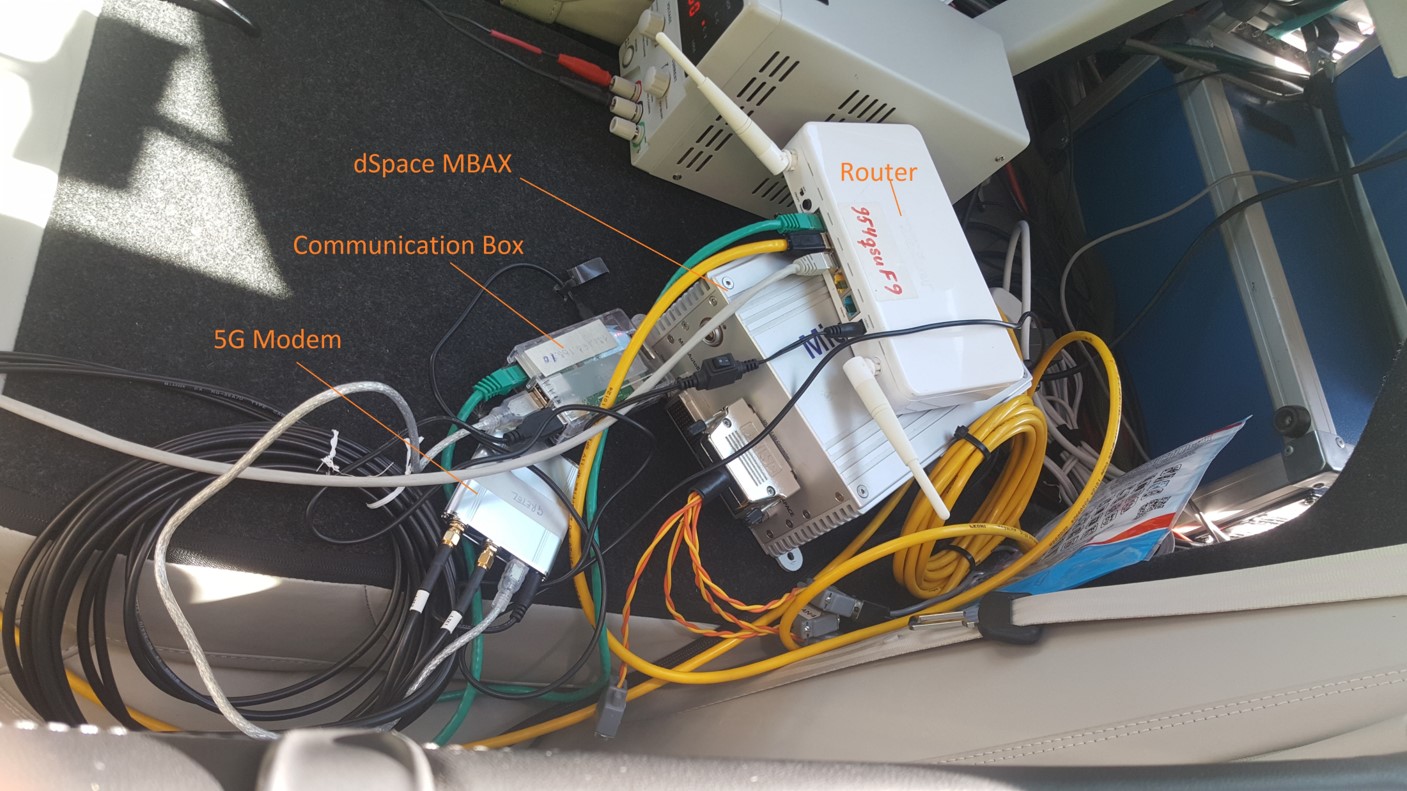}\label{fig:mbaximage}
}
\\
\subfloat[]{ 
\includegraphics[width=0.32\textwidth,height=55pt]{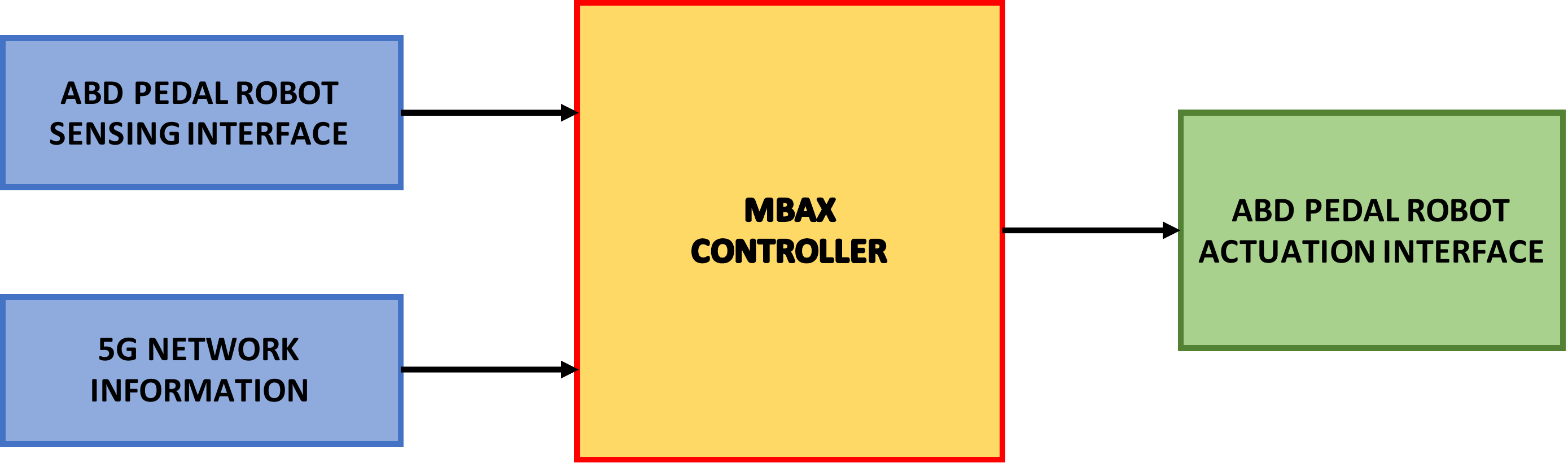}\label{architettura_DSPCACE}
}
\caption{Equipment of the Volvo S90: a) Detail of the ADB Pedal Robot; b) Details of the on-board equipment; c) Schematic overview of the software architecture executed on the dSpace MicroAutobox.}
\label{fig:testcarS90_instrument}
\vspace{-10pt}
\end{figure}

The S90 leverages the ADB Pedal Robot (shown in Fig.~\ref{fig:s90inside}) for actuating the finite-time cooperative protocol. The controller action is on-board computed via the dSpace Micro Autobox (MBAX), a real-time platform interconnected with the vehicle and the on-board equipment for cooperative driving via the Controlled Area Network (CAN) and the Local Area Network (LAN), respectively.
Namely, the ADB Pedal Robot drives the acceleration and braking systems of the vehicle (through mechanical actuators on the pedals) tracking the driving profile provided by the cooperative strategy. Moreover, the robot has a direct connection to the vehicle GNSS unit, IMU and CAN and communicates with the dSpace MBAX. The cellular communication is again guaranteed by the on-board pre-5G Telit Modem and Roof Antenna. Further specific on-board devices also include:
\begin{enumerate*}[label=\alph*]
    \item Communication Box, implemented on a  Raspberry PI that is opportunely programmed \LMC{and deployed} to receive and convert data from the Ericsson test network so that they are readable from the dSpace MBAX. 
    \item On-board Switch for providing the in-vehicle LAN.
\end{enumerate*}
Details of both hardware configuration and software architecture, executed on MBAX, are shown in Figs. \ref{fig:mbaximage} and \ref{architettura_DSPCACE}, respectively.
Specifically, the MBAX Controller has been prepared to run a Matlab/Simulink schema whose main aim is to gather information about the state of vehicles, merge it with traffic data coming from the cloud and compute a control output for the \textit{ADB}.
dSpace MBAX operation can be hence summarized as follows: $i$) receiving current states of Volvo S90 from the \textit{ADB}; $ii$) gathering traffic information from the communication box; $iii$) computing the control input and sending actuation signals to the \textit{ADB}; $iv)$ communicating current known states, through the communication box, to the cloud.

\subsection{OpenDLV Communication Module}
\label{sect:OpenDLV}
OpenDLV is a modern open source software environment to support the development and testing of self-driving vehicles. It has been implemented using high quality and modern C++14 with a strong focus on code clarity, portability, and performance. In addition, it is entirely based on micro-services, usually run in separate docker~\cite{merkel2014docker} containers.
For a more comprehensive treatment, we refer the interested reader to the specific literature (\cite{berger2016open} \cite{berger2017containerized} \cite{benderius2018best}). 
In  order to extend the communication abilities of our test cars to $5$G, an additional module has been designed and developed, in accordance with the OpenDLV specification. The \textit{OpenDLV Standard Message Set} has been hence expanded to account for sensing information from other vehicles. 
Among the added message properties, we find the number of connected nodes (i.e., the number of current active nodes at the intersection), the sequence number of the packet and a so-called \textit{whoami} field that each vehicle uses to identify itself 
\LMC{within} a fleet. 
The sequence number is used to discard packets received out of sequence. The OpenDLV receiver keeps track of received packets: if the current received packet has a lower sequence number than the last packet received, it will be discarded and won't be replayed in the UDP-based OpenDLV session.
The OpenDLV communication module has been thought to run in a container within an OpenDLV session. In particular, this module is able to exchange data with other OpenDLV components such as the \textit{proxy interface} to the car CAN bus and the \textit{Applanix GPS}. 

\subsection{Hermes Traffic Monitor}
A web interface has been built that allows a user to monitor the status of the traffic, by providing details about all connected vehicles. Thanks to the generalized structure of the communication software and the standardization of most of the \textit{traffic management} system, the development of this component took a minimal amount of effort and time.

\section{Experimental Validation}
\label{sec:experiments}

\subsection{Illustrative Driving Scenario}\label{descrizione}
The tests have been executed at the City Area of the AstaZero\footnote{http://www.astazero.com} (near Gothenburg, Sweden).
The City area\footnote{http://www.astazero.com/the-test-site/test-environments/city-area/} consists of small town centers with streets, of varying widths and lanes, equipped with bus stops, pavements, street lighting, and building backdrops. The road system, allowing different kinds of driving tests, includes roundabouts, T-junctions and return-loops. Connections to the rural road occur in two places. The area has a relatively flat surface with dummy blocks that resemble buildings and host some technical aids such as radars (see Fig.  \ref{foto_cityarea}). One of the blocks also contains space for a control room and a warehouse for dummies.

\begin{figure}[!ht]
    \centering
    \includegraphics[width=0.75\columnwidth]{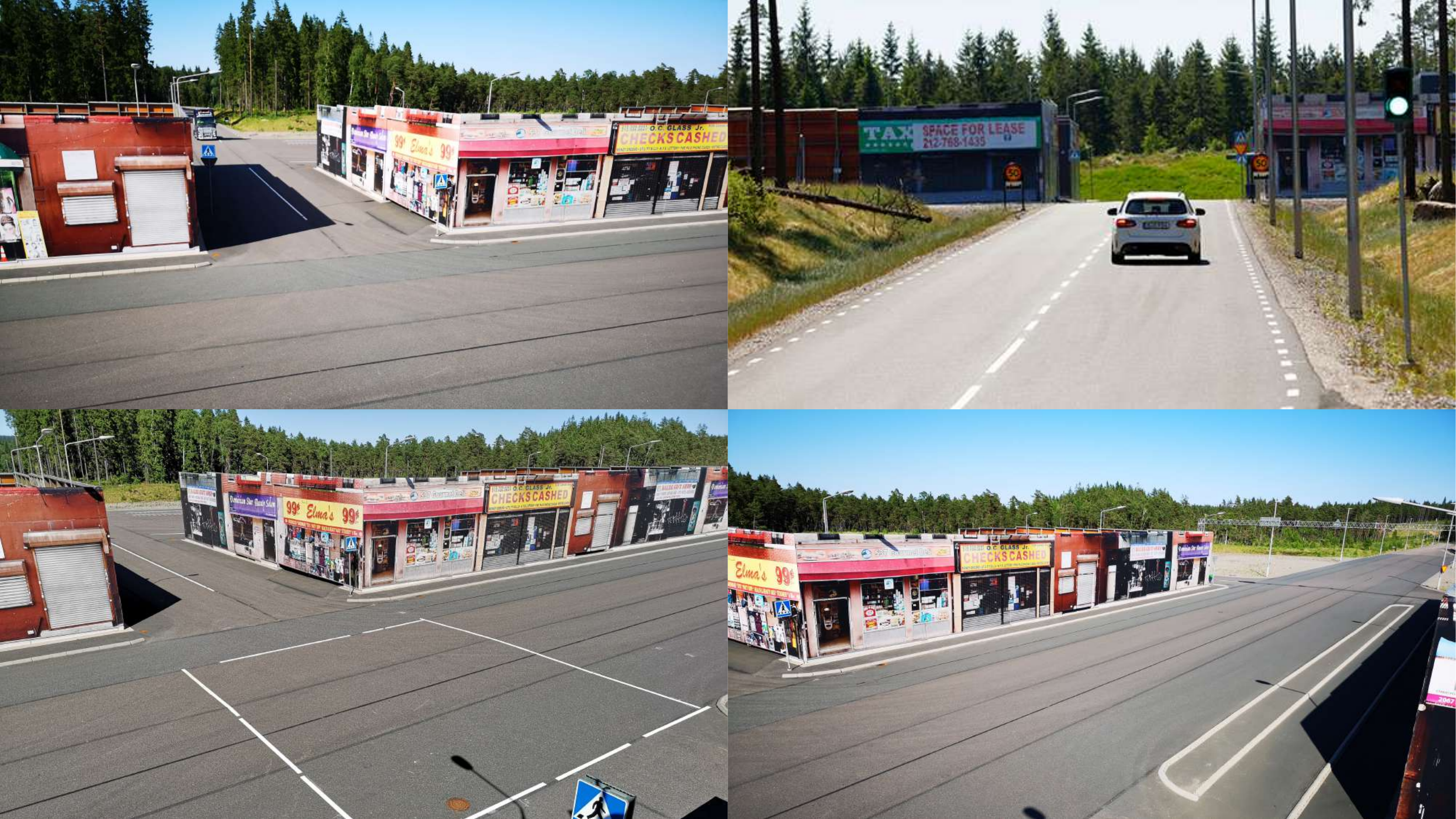}
    \caption{The City Area at AstaZero}
    \label{foto_cityarea}
\end{figure}

\begin{figure}[!ht]
    \centering
    \vspace{-10pt}
    \includegraphics[width=0.85\columnwidth]{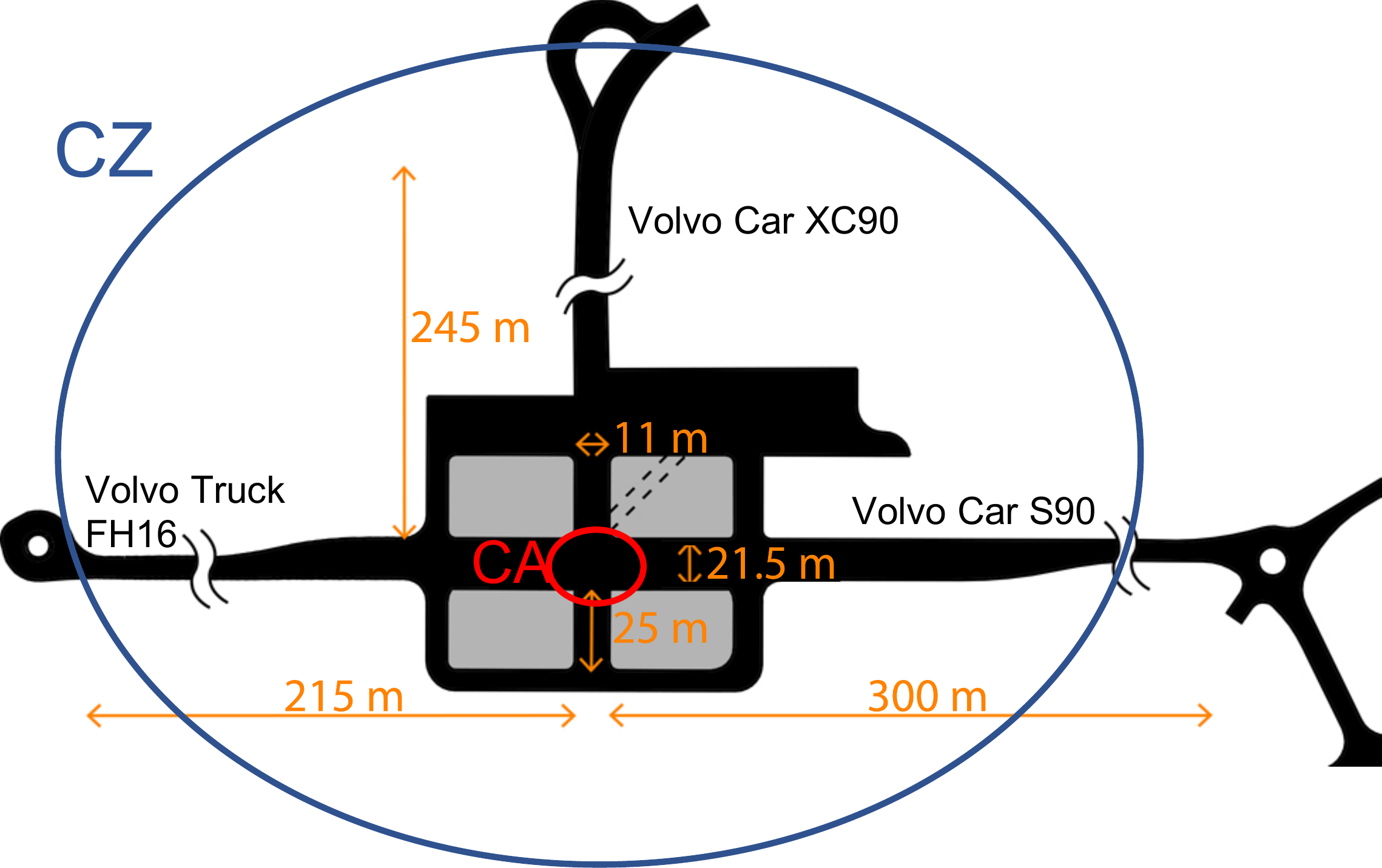}
    \caption{Map of the City Area at AstaZero}
    \label{fig:cityareamap}
\end{figure}

The map of the City Area exploited for the tests is reported in Fig. \ref{fig:cityareamap}. Here, the Cooperative Zone (CZ) of interest is marked with a blue circle, while the Conflicting Area (CA) (at the intersection center) is marked with a red circle. 
Different experimental runs were performed in different driving conditions. 
In what follows we will first describe how we performed preliminary trials in a so-called \emph{multilane} scenario allowing us to safely simulate a real-world intersection. We will then move to the actual street junction scenario, for which we will report some of the experimental results related to the case when the vehicles, initially located as in Fig. \ref{fig:cityareamap}, access the CZ with initial velocities and relative positions that would lead to collision without any control action. 
This exemplar scenario also considers mixed traffic. Namely, the Volvo XC90 and S90 are fully automated, while the Volvo Truck F16 is human-driven, but connected, i.e., it shares information about its actual position and speed. Note that mixed traffic situations are the ones that at first will arise in the very next future when the fully autonomous and the human-driven cars will interact on the road via a V2V communication network.

\subsection{Experimental Campaign}
Experimental validation is carried out via both multilane and street Junction experiments.
A multilane experiment is a safe real emulation of a street intersection where roads leading to the intersection are projected parallel to each other. Involved vehicles will drive in parallel while approaching a designated area they must access into, according to a mutual exclusion policy. 
 This scenario is of the utmost importance since it allows to test control algorithms in a highly realistic situation, taking account of the actual delay introduced by centralized communication hardware and software, without the risk that vehicles will collide.




Once results have been validated with the multilane experiments, they can easily be replicated on a real intersection without risking collisions. It must be highlighted that buildings placed at the corners of the intersection constitute an obstacle against both the human driver's eye and a virtual direct \emph{Vehicle-to-Vehicle} communication. 
\begin{figure*}[!t]
\centering
\vspace{-10pt}
\subfloat[]{\includegraphics[width=0.24\textwidth]{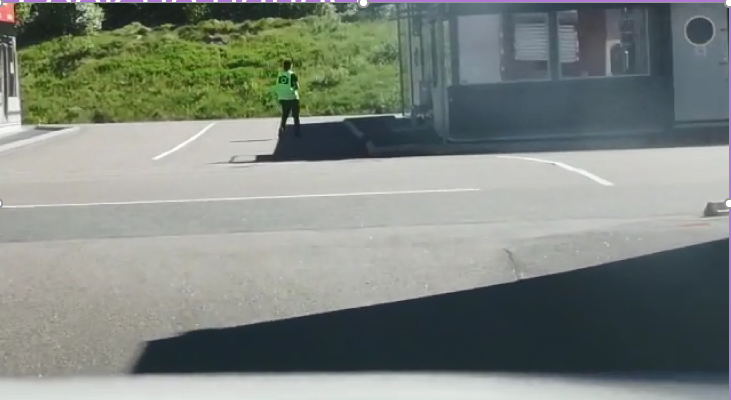}\label{fig:saved_1}}
\subfloat[]{ \includegraphics[width=0.24\textwidth]{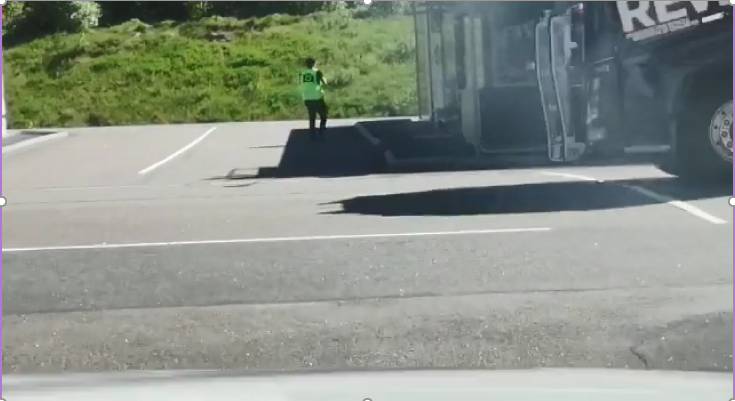}\label{fig:saved_2}}
\subfloat[]{\includegraphics[width=0.24\textwidth]{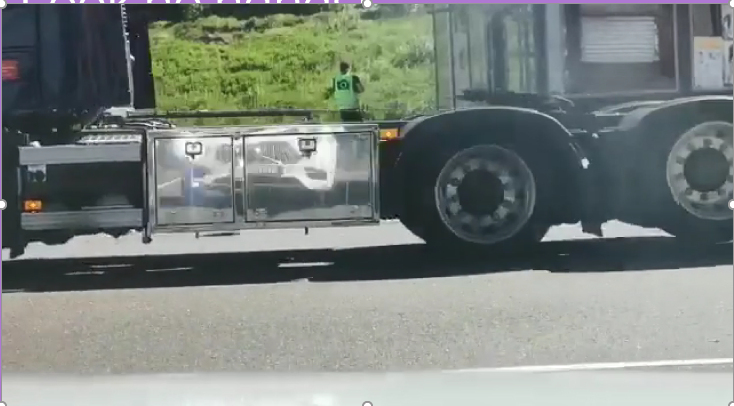}\label{fig:saved_3}}
\subfloat[]{ \includegraphics[width=0.24\textwidth]{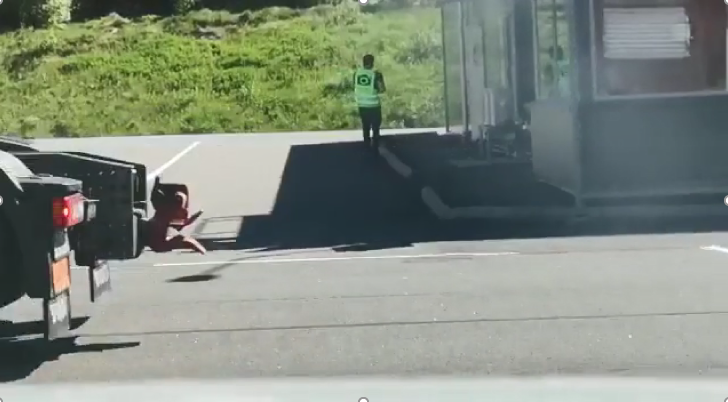}\label{fig:saved_4}}\\
\caption{First Person View from Volvo Car XC90: the truck stays hidden until the very last second due to the shape of the urban area.}
\vspace{-10pt}
\label{fig:saved}
\end{figure*}
Under this assumption, in fact, each vehicle would not be able to see hidden mobile nodes 
approaching the crossroad until the very last moment, without being connected to the 5G PoC. Our tests indeed show that it is still possible to avoid collisions and reach a consensus thanks to a high speed (5G) centralized connection, despite Volvo XC90 staying hidden from Volvo Truck FH16 and Volvo Car S90 (and viceversa) for almost the entire duration of the experiment (as demonstrated by the sequence of snapshots in Fig.~\ref{fig:saved}).


\subsection{Outcomes}

In both classes of experiments, multilane and intersection, the overall system has successfully demonstrated its capacity of managing the negotiation of street junctions over pre-5G PoC, using LTE radio with 5G EPC. In the following sections, the results obtained in the real intersection scenario will be illustrated and discussed. We discuss the results for just one class of experiments since the two classes are identical from a scientific point of view. The choice of executing the multilane set of trials before moving to the real intersection scenario just depends on reasons related to the safety of people inside the cars.

\section{Network Performance}
\label{sec:netperformances}


\LMC{The proposed model and framework (Sections ~\ref{sec:problemstatmenet} and ~\ref{sec:ControlProtocol}) have been successfully implemented, deployed and tested on three vehicles at the AstaZero (AZ) proving ground. 
Communication among entities has been enabled by the pre-5G PoC test network provided by Ericsson. The pre-5G PoC, using LTE radio with 5G EPC, offers a full radio coverage of the City Area (CA) of AstaZero, as outlined in Fig.~\ref{fig:cityareamap}. In order to reduce control and management times, the distributed cloud network is installed within the boundaries of the proving ground itself. Experiments discussed in this work date back to March 2018 and the Ericsson test network has evolved meanwhile towards 5G NR.} 

\subsection{Preliminary Latency Analysis}
\label{sec:preliminaryanalysis}


\LMC{In early March 2018, performance of the network has been measured in terms of delay and latency. Early measurements have been carried out with the aim of understanding the impact of the network on the overall communication performance. The analysis enabled us to design a communication software that could meet the safety timing constraints required by the involved control algorithms. The network flow between two laptops connected to the test network showed an average TCP Round Trip Time (RTT) of $24ms$ with Standard Deviation (STD) of $0.028$. The RTT has been measured using the TCP Acknowledgement segment. In addition, the measurements of the time between two consecutive frames showed an inter-frame latency of $24ms$, along with $0.027$ standard deviation. We need to point out that measurements of our interest have been taken at software level, and that the air-interface latency is a small part of the RTT we measured. Finally, a sequence number analysis has been conducted via the Stevenson graph (omitted for sake of brevity) suggesting
that there are no relevant packet delays in the communication.
From this analysis it is reasonable to assume that the performance offered by the network, including delay, meets control constraints and potentially enables a safe and reliable communication between a set of mobile nodes and a remote endpoint on the ground.}

\subsection{Delay components}
\LMC{Three kinds of delays have been taken into account when measuring the performance offered, through the pre-5G PoC, by the communication software designed for our vehicles}. Firstly, since the conceived \textit{Traffic Management} protocol has been designed to rely on TCP, we wanted to measure the impact of the chosen transport protocol on the overall communication. Moving up along the ISO/OSI stack, two additional delay contributions have been considered, namely the \textit{Application Layer ACK} (WS ACK) and the so-called \textit{State RTT}, as outlined in Figure \ref{fig:delay_composition}. The former is the acknowledgement time of a websocket packet at the application level.  The latter measures the interval between the time a vehicle sends its state and the time it receives the same state reflected from the traffic manager, in the form of a \textit{traffic update} message. The \textit{State RTT} is of utmost importance, since it represents the delay that is actually impacting the control algorithm. Also in this case, we want to remark that measurements of our interest have been taken at software level, and that the air-interface latency is a small part of the RTT we measured. 

\begin{figure}[!t]
    \centering
    \includegraphics[width=1\columnwidth]{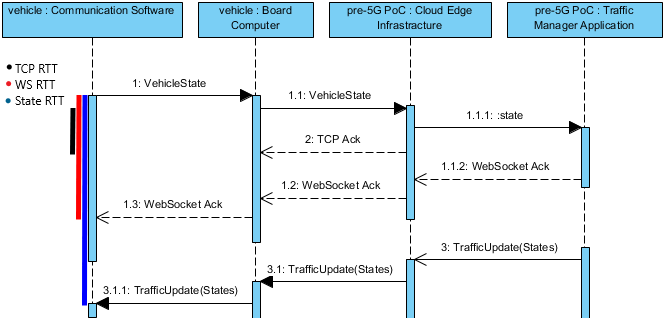}
    \caption{Contributions to measured delays}
    \label{fig:delay_composition}
\end{figure}

\subsection{TCP Analysis} 

\begin{figure*}[!t]
\centering
\vspace{-10pt}
\subfloat[]{\includegraphics[width=0.253\textwidth]{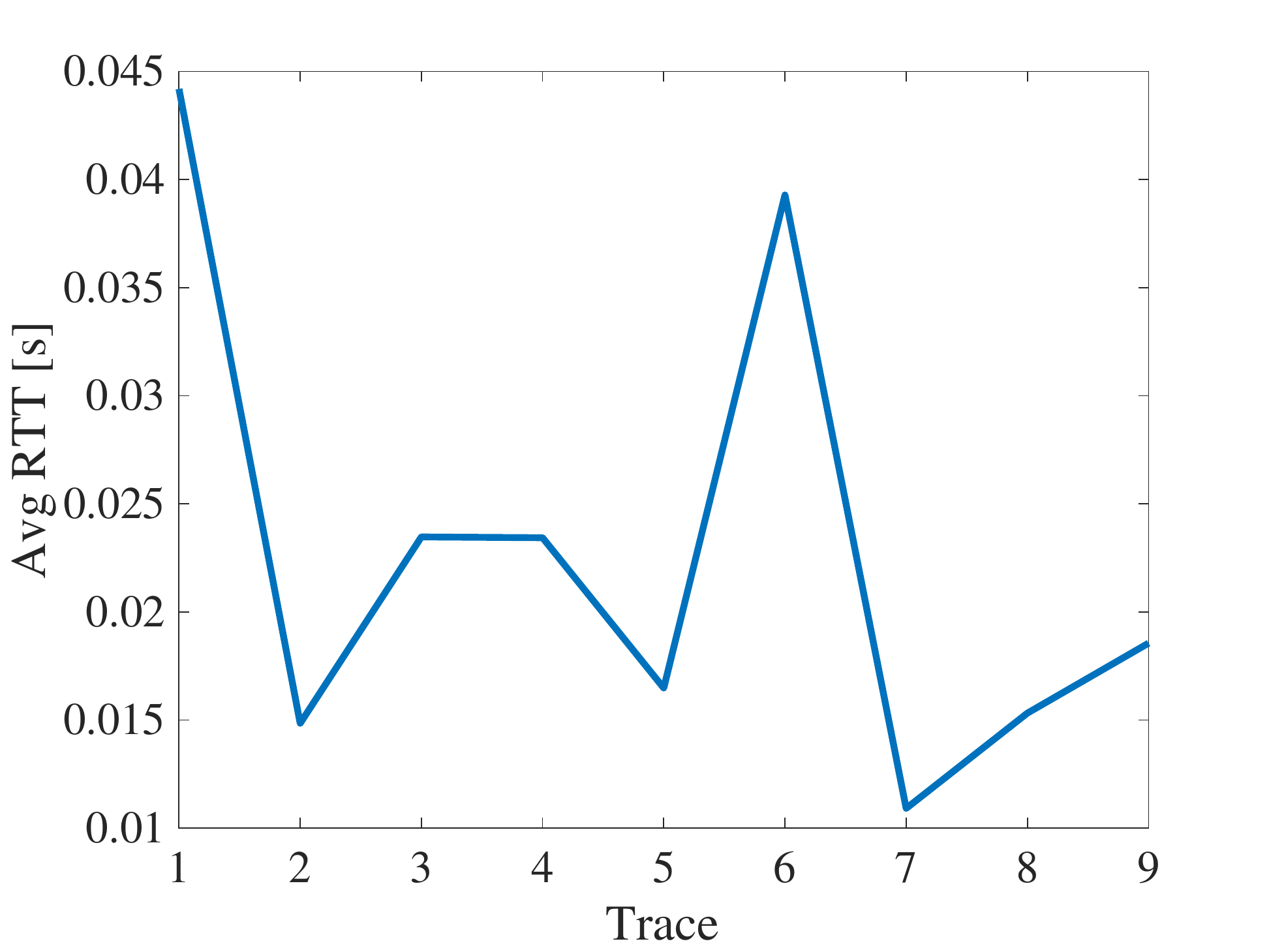}\label{fig:av1}}
\subfloat[]{ \includegraphics[width=0.253\textwidth]{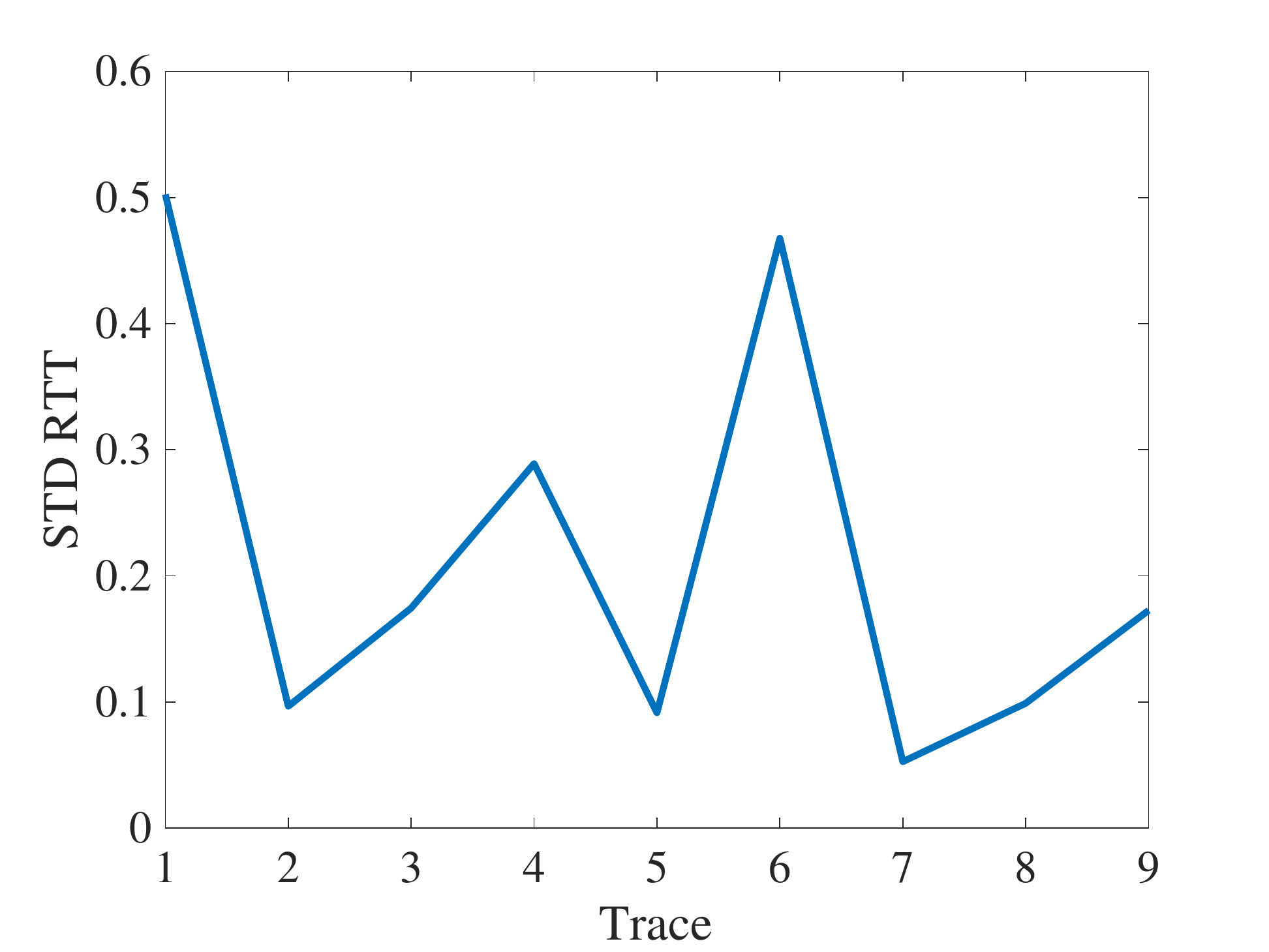}\label{fig:std1}}
\subfloat[]{\includegraphics[width=0.253\textwidth]{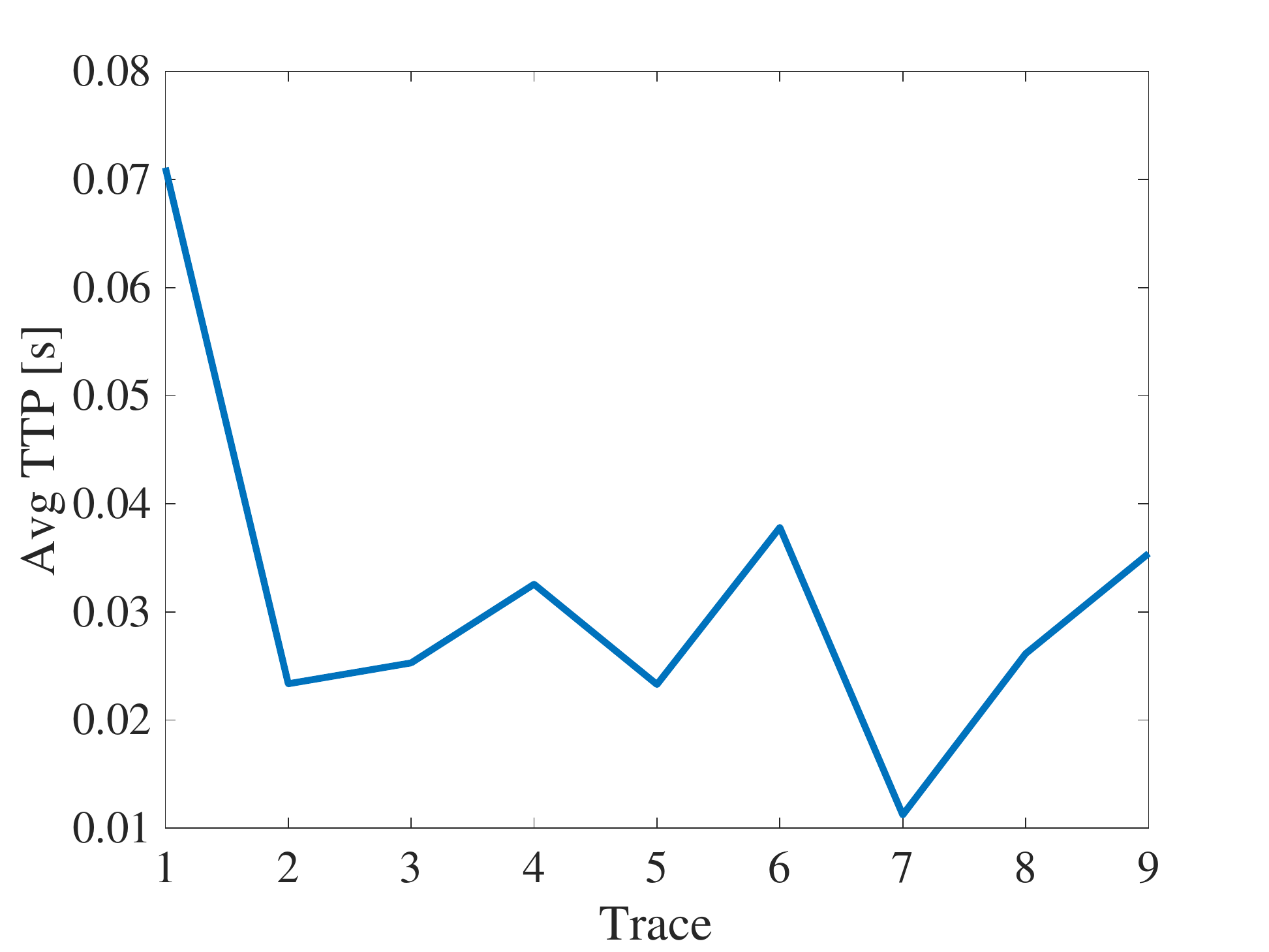}\label{fig:av2}}
\subfloat[]{ \includegraphics[width=0.253\textwidth]{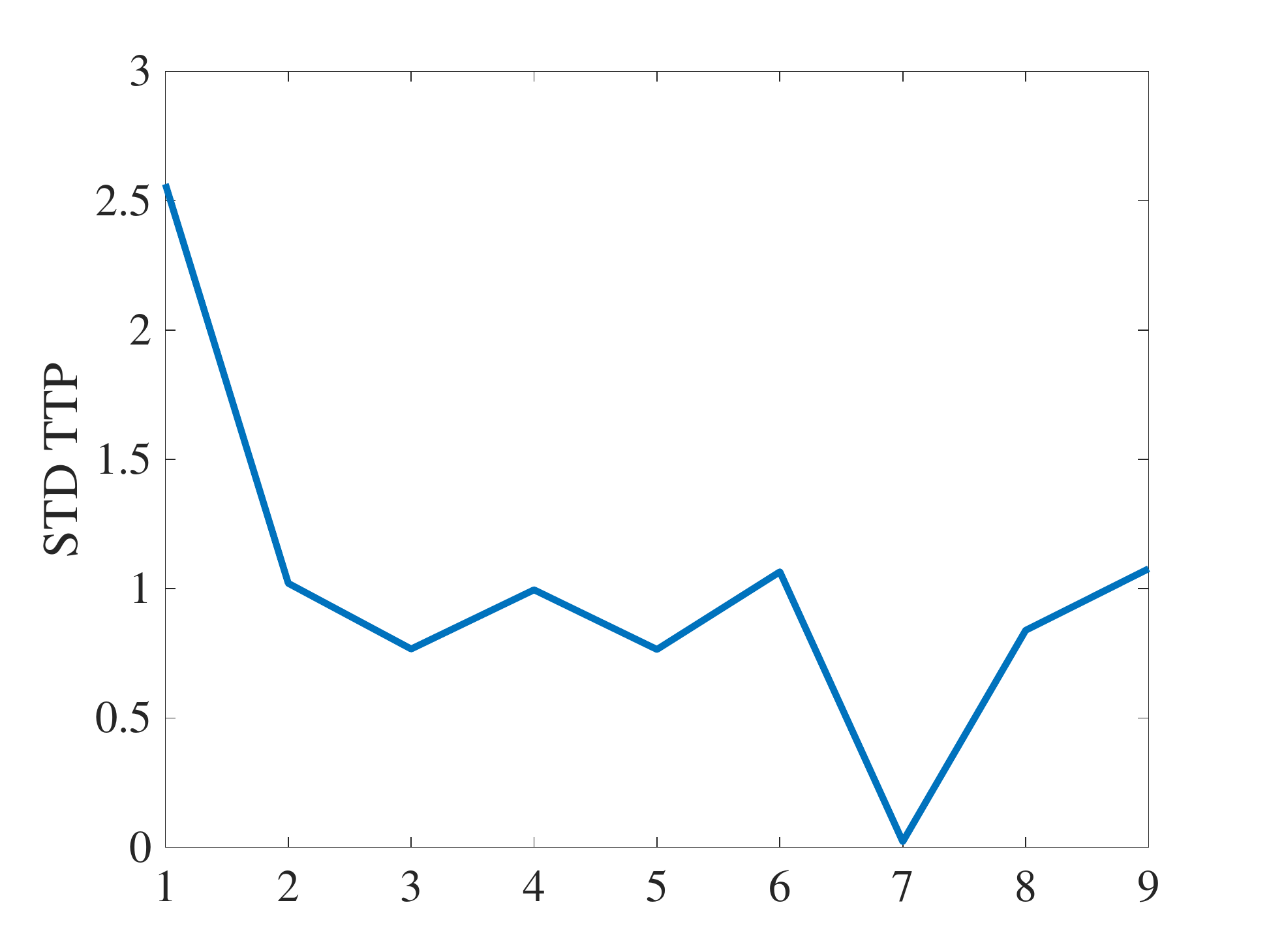}\label{fig:std2}}
\caption{Nine TCP traces recorded during the experiments:
a) average TCP Round Trip Time (RTT);
b) RTT Standard Deviation normalized to $N_{sample} -1$;
c) average Time From Previous Frame (TTP);
d) TTP Standard Deviation normalized to $N_{sample} -1$.}
\vspace{-5pt}
\label{fig:net_summary}
\end{figure*}

\LMC{During the experiments, nine TCP traces have been collected. As envisaged, their analysis shows very low delays along the whole experimental session. In particular, two features have been considered at this stage: Round Trip Time (RTT) and Time From Previous Frame (TTP), whose aggregated behavior during the entire duration of the experiments is reported in Fig.~\ref{fig:net_summary}. A more detailed diagram has been reported for Traces $1$ and $8$ in Fig.~\ref{fig:tcpdelay}. With respect to the \textit{RTT}, the average value oscillates between $15$ and $45$ ms. Furthermore, the average time between two subsequent TCP frames received oscillates between $10$ and $40$ ms with a peak of $70$ ms on the first trace. It is important to point out that the nine traces differ between one another in number of packets. In particular, trace $1$ stores a number of packets whose order of magnitude is $10^5$, while the others store a number of packets with orders of magnitude between $10^3$ and $10^5$. A noticeable improvement has been observed since the preliminary tests, due to hardware and software improvements that have been applied to the test network. Outliers in the diagrams are negligible as their number is infinitesimal compared to the overall number of transmitted packets. Reliability of the system is hence not impacted. To this extent, the Stevenson Diagram for Trace 1, omitted for sake of brevity, does not show any major issues related to communication.}

\begin{figure}[!t]
\centering
\vspace{-10pt}
\subfloat[]{ \includegraphics[width=0.25\textwidth]{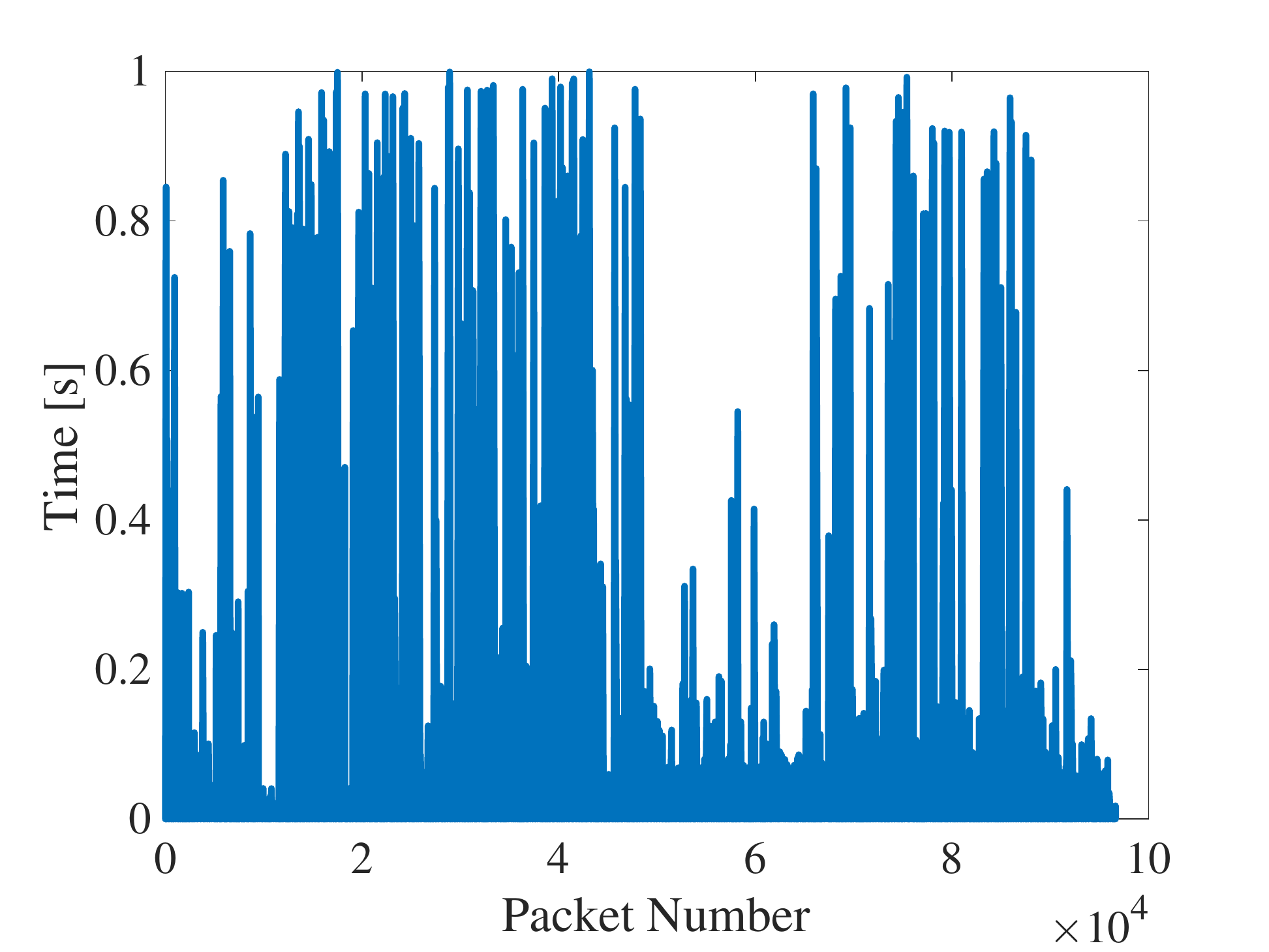}\label{fig:rtt12}}
\subfloat[]{ \includegraphics[width=0.25\textwidth]{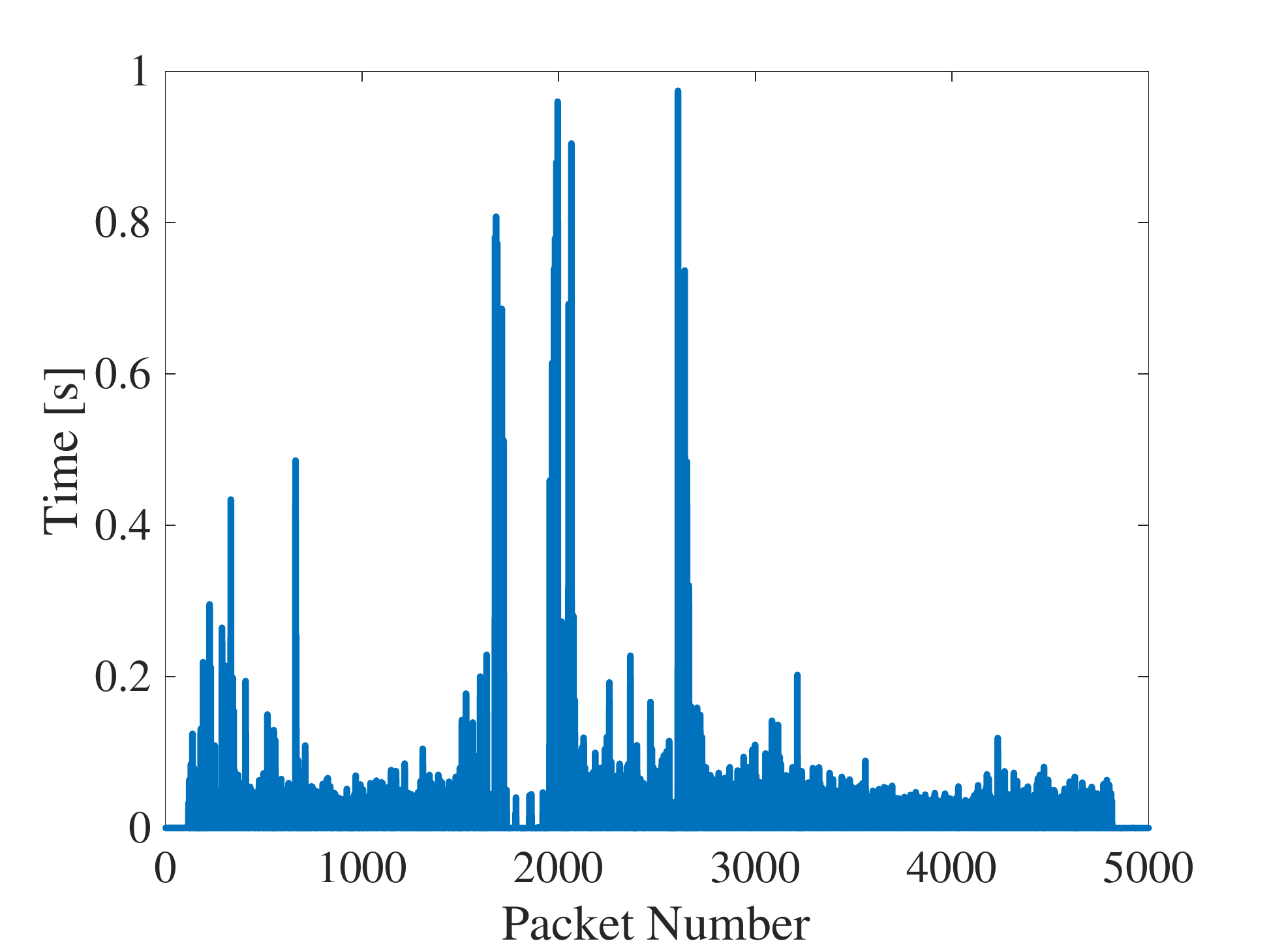}\label{fig:rtt22}}
\caption{Round Trip Time TCP Delay during two of the nine TCP traces we have recorded: a) Trace 1; b)Trace 8.}
\vspace{-10pt}
\label{fig:tcpdelay}
\end{figure}
\begin{figure}[!t]
\centering
\vspace{-10pt}
\subfloat[]{\includegraphics[width=0.8\columnwidth]{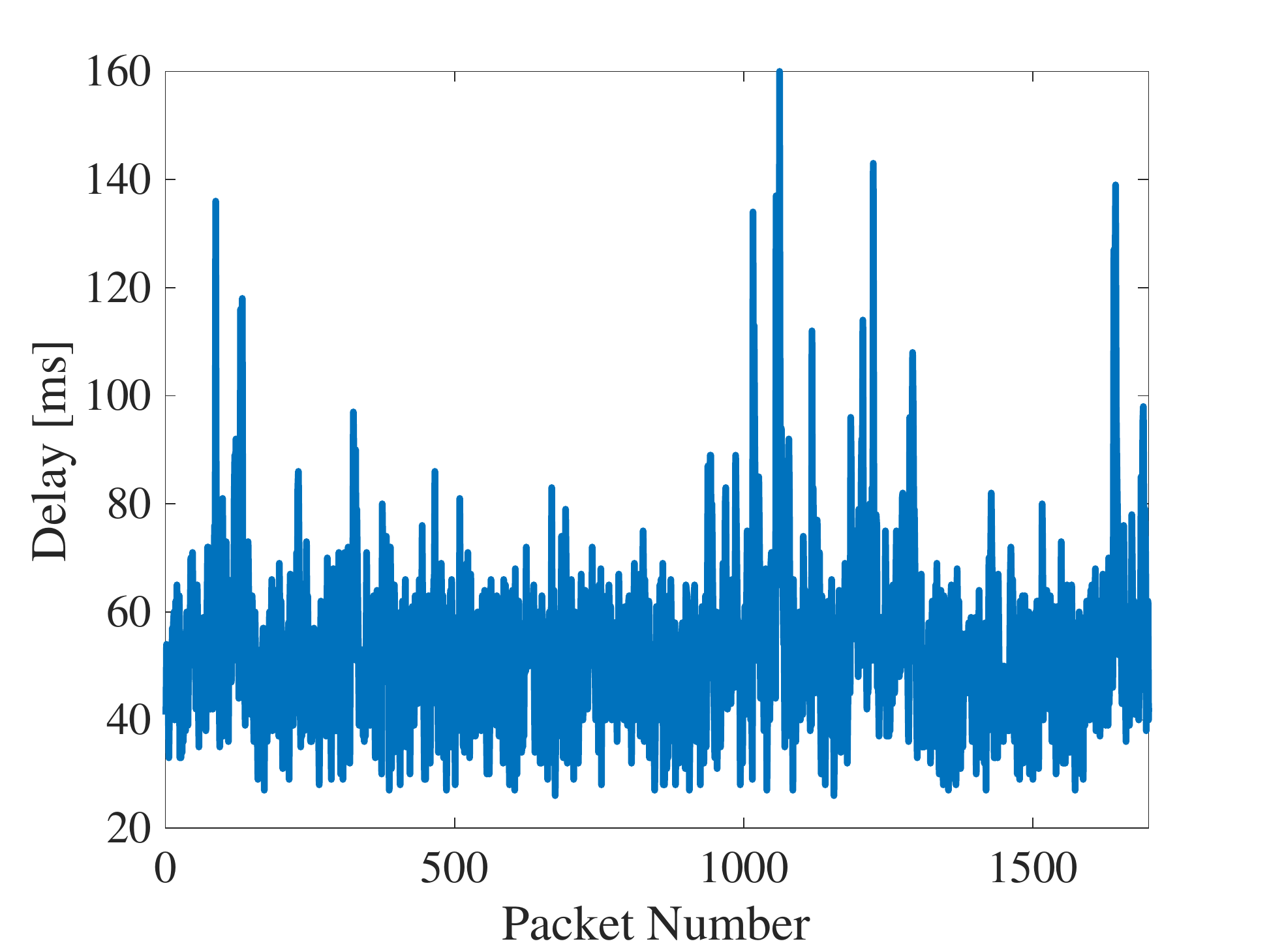}\label{fig:staterttonesingleexperiment}}\\
\vspace{-10pt}
\subfloat[]{\includegraphics[width=0.4\textwidth]{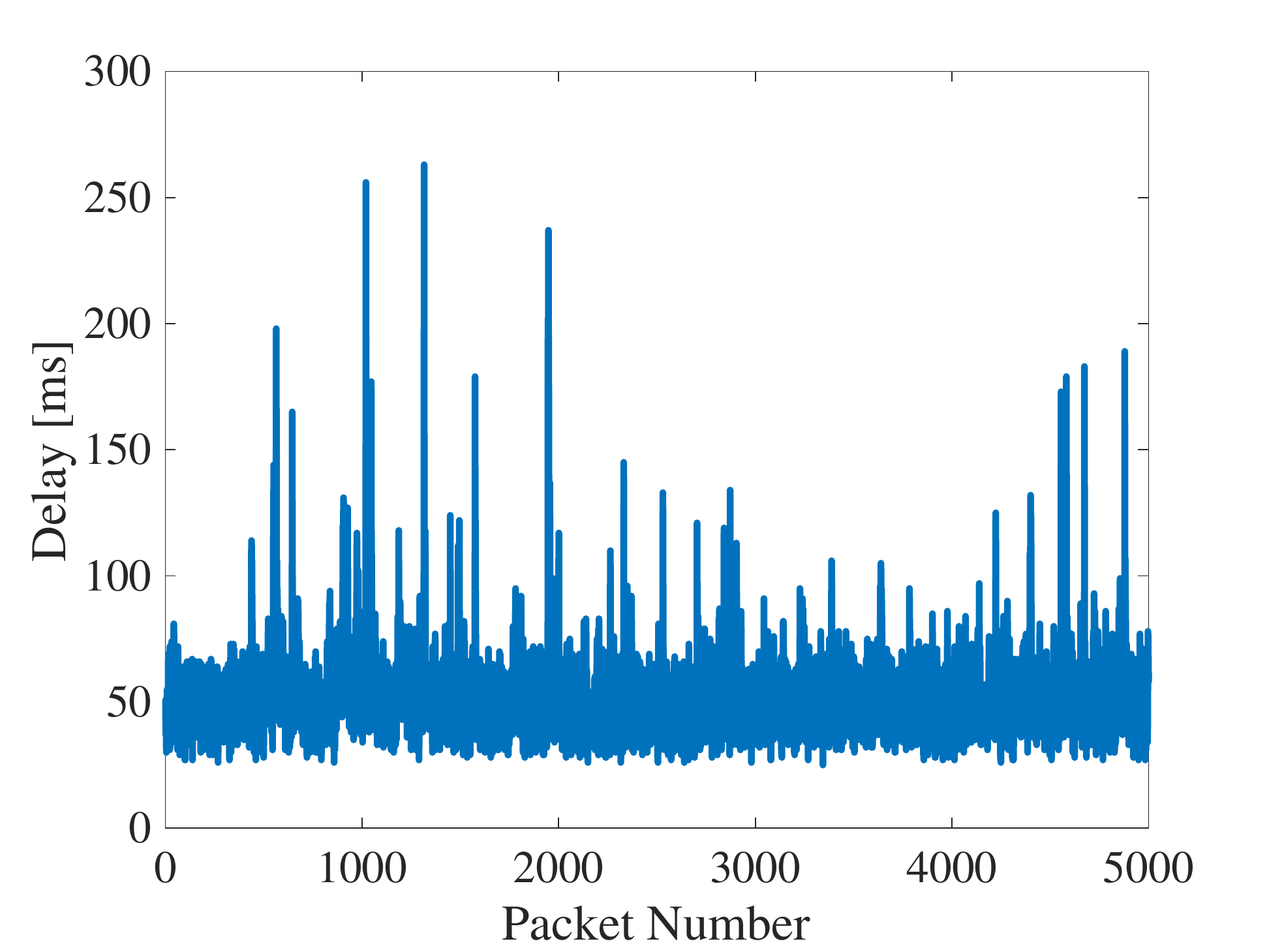}\label{fig:statertt}}
\vspace{-7pt}
\caption{State RTT:
a) for one experiment;
a) for a series of experiments conducted in the real intersection scenario.
}
\vspace{-15pt}
\label{fig:appdel2}
\end{figure}

\subsection{Communication Software Performance}
As mentioned above, one parameter for classifying the performance perceived at application level is the RTT ACK. 
Measurements report a value which is bounded within the interval $\left[0-50\right]$ ms . As a matter of fact, these data are quantitative and cannot be considered $100\%$ reliable ($0$ seconds delay is not realistic). However, they provide coarse-grained information about the order of magnitude of the measured indicator. Such lack of precision can be ascribed to two main factors:
a) \emph{Software}: The acknowledgement time has been measured, on the vehicles, via the \texttt{ack-callback} provided by the C Websocket Library, which is known not to be triggered in a strictly real-time fashion;
b) \emph{Operating System}: The acknowledgement is measured at a really high level in the operating system.
\LMC{With the above considerations in mind, we can nonetheless safely state that the application Level RTT ACK stays below the envisaged design threshold.
On the other hand, the parameter used to measure the actual delay impacting on the control algorithm is the \textit{State RTT}. It is indeed, for a vehicle, the window between a \emph{status message} sent to the traffic manager and the very same message received as \emph{traffic information} from the traffic controller. This delay is measured at application level and it is the ultimate, composed, delay that might affect the control algorithm. The control action is indeed estimated using, as input, the latest neighbours' state received. To this extent, the safety of the control algorithm must be proved against this value. For this kind of delay, our measurements have reported an average value of $70ms$. This ensures that the system behaved in a reliable manner for the entire duration of the experiment. Fig.~\ref{fig:staterttonesingleexperiment} reports the state RTT delay curve in the case of a single experiment. Fig.~\ref{fig:statertt}, on the other hand, aggregates the results of all of the experiments we conducted in the real intersection scenario. In both cases, the outliers can be ignored since, by design, we imposed the rule to discard so-called \emph{late predecessor} packets, i.e., packets with lower sequence numbers arriving at a node that has already successfully received a packet belonging to the same stream and carrying a higher sequence number.}

\begin{figure}[!t]
    \centering
    \vspace{-10pt}
    \includegraphics[width=0.9\columnwidth]{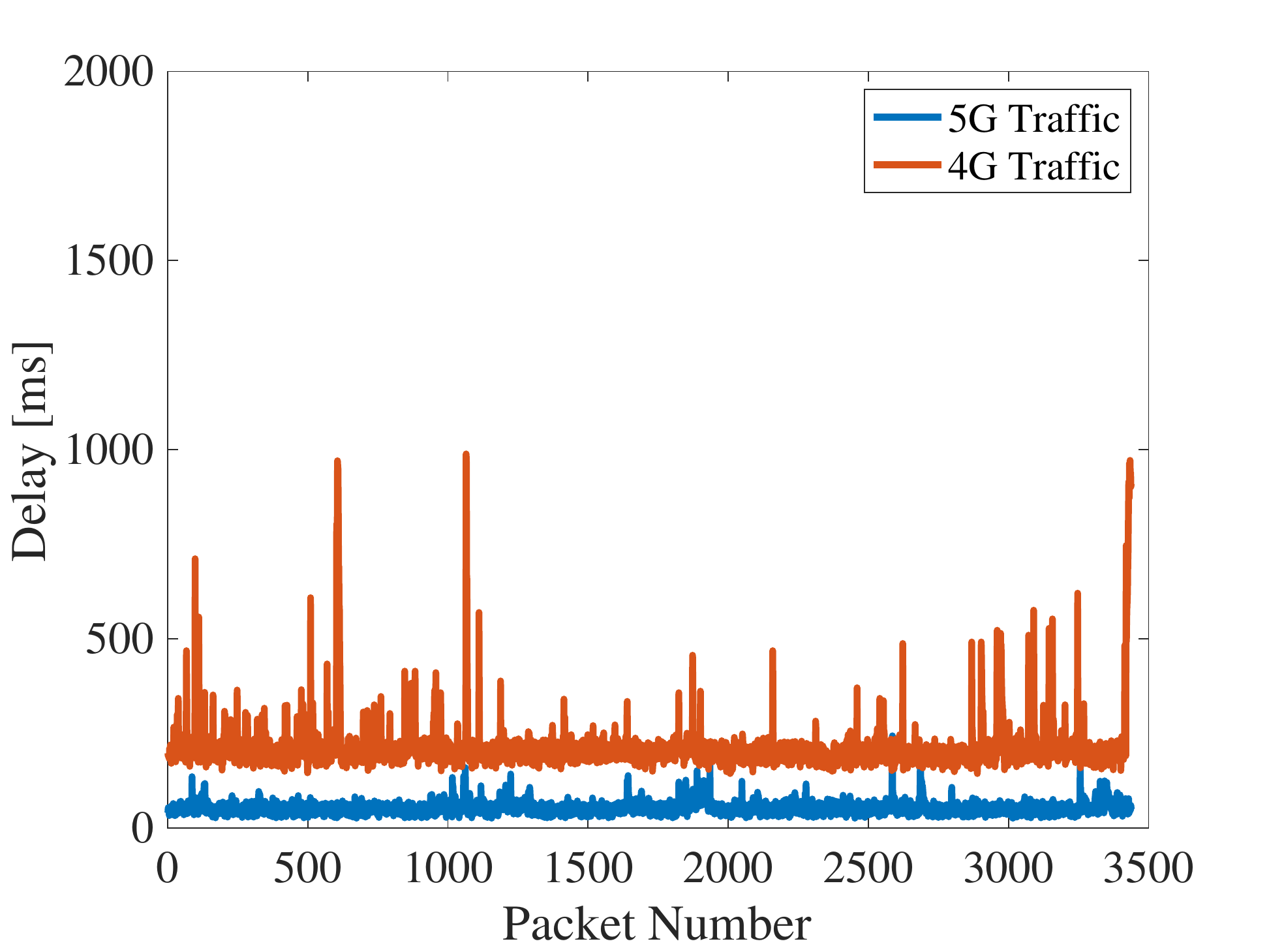}
    \caption{Comparing results with 4G Public cloud: In red it is plotted the trend of delays registered on a public cloud against the timing we collected on our network (blue)}
    \label{fig:appdel1}
        \vspace{-10pt}
\end{figure}

\subsection{Further considerations}
As part of our trials, we were also interested in investigating the performance increase deriving from the adoption of a 5G-enabled network using Edge cloud technology. All other things being equal, we ran the same experiments over a publicly reachable LTE infrastructure. Results are reported in Fig.~\ref{fig:appdel1} and clearly show that the measured delay, in case of the public infrastructure, has more than doubled, with an average value that is close to $200ms$. While the observable difference in performance is mostly due to the difference between an application server deployed at the edge and a more distant one, an interesting direction to further explore is to analyze the performance of our framework over a non dedicated network supported by state of the art technology. 


For the sake of completeness, we remark that we did not carry out a statistically reliable set of comparative trials. Since we did not get enough LTE data to claim statistically relevant results, the presented graph should be taken with a grain of salt. It nonetheless gives an idea as of the qualitative performance trend in the presence of the two mentioned communication infrastructures.



\subsection{Results from Cooperative Crossing Driving Tests}


\begin{figure*}[!t]
\centering
\vspace{-10pt}
\subfloat[]{\includegraphics[width=0.32\textwidth]{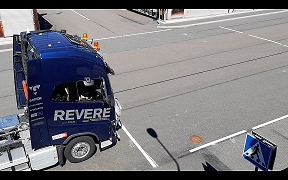}\label{fig:esperimento_1}}
\subfloat[]{ \includegraphics[width=0.32\textwidth]{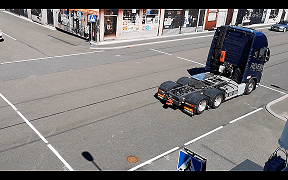}\label{fig:esperimento_2}}
\subfloat[]{\includegraphics[width=0.32\textwidth]{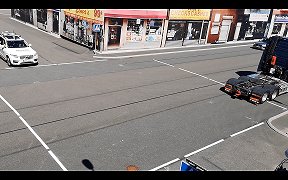}\label{fig:esperimento_3}}\\
\vspace{-10pt}
\subfloat[]{ \includegraphics[width=0.32\textwidth]{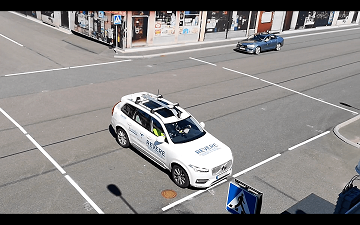}\label{fig:esperimento_4}}
\subfloat[]{\includegraphics[width=0.32\textwidth]{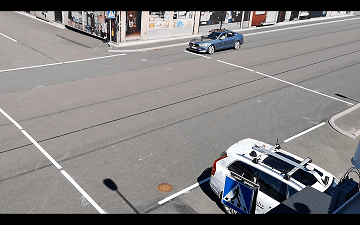}\label{fig:esperimento_5}}
\subfloat[]{ \includegraphics[width=0.32\textwidth]{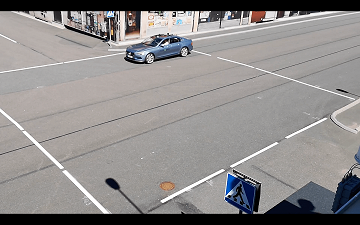}\label{fig:esperimento_6}}\\
\caption{Snapshots of autonomous crossing over 5G}
\vspace{-10pt}
\label{esperimento_incrocio}
\end{figure*}
\begin{figure*}[!t]
\centering
\vspace{-10pt}
\subfloat[]{\includegraphics[width=0.32\textwidth]{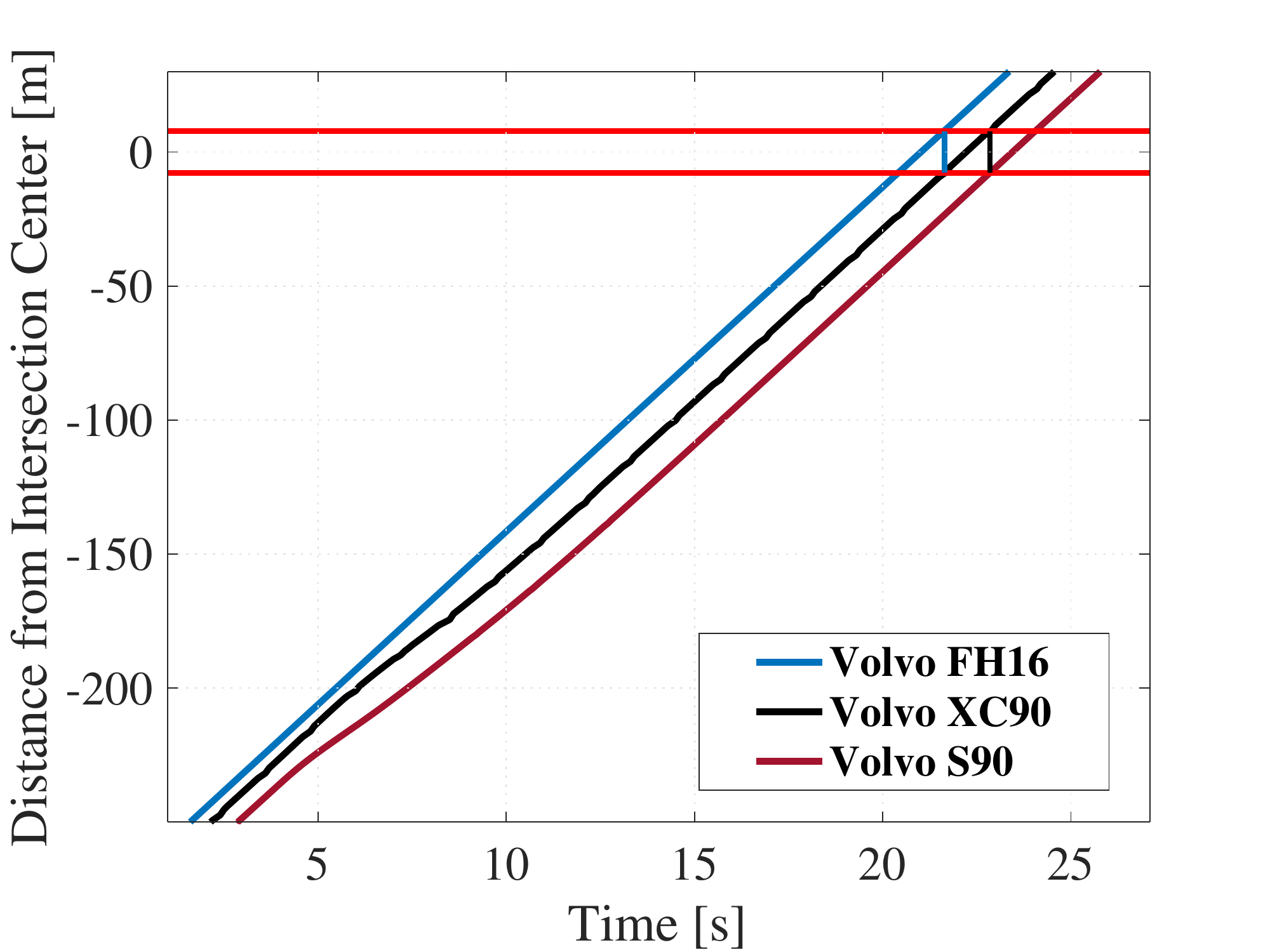}\label{fig:Position}}
\subfloat[]{ \includegraphics[width=0.32\textwidth]{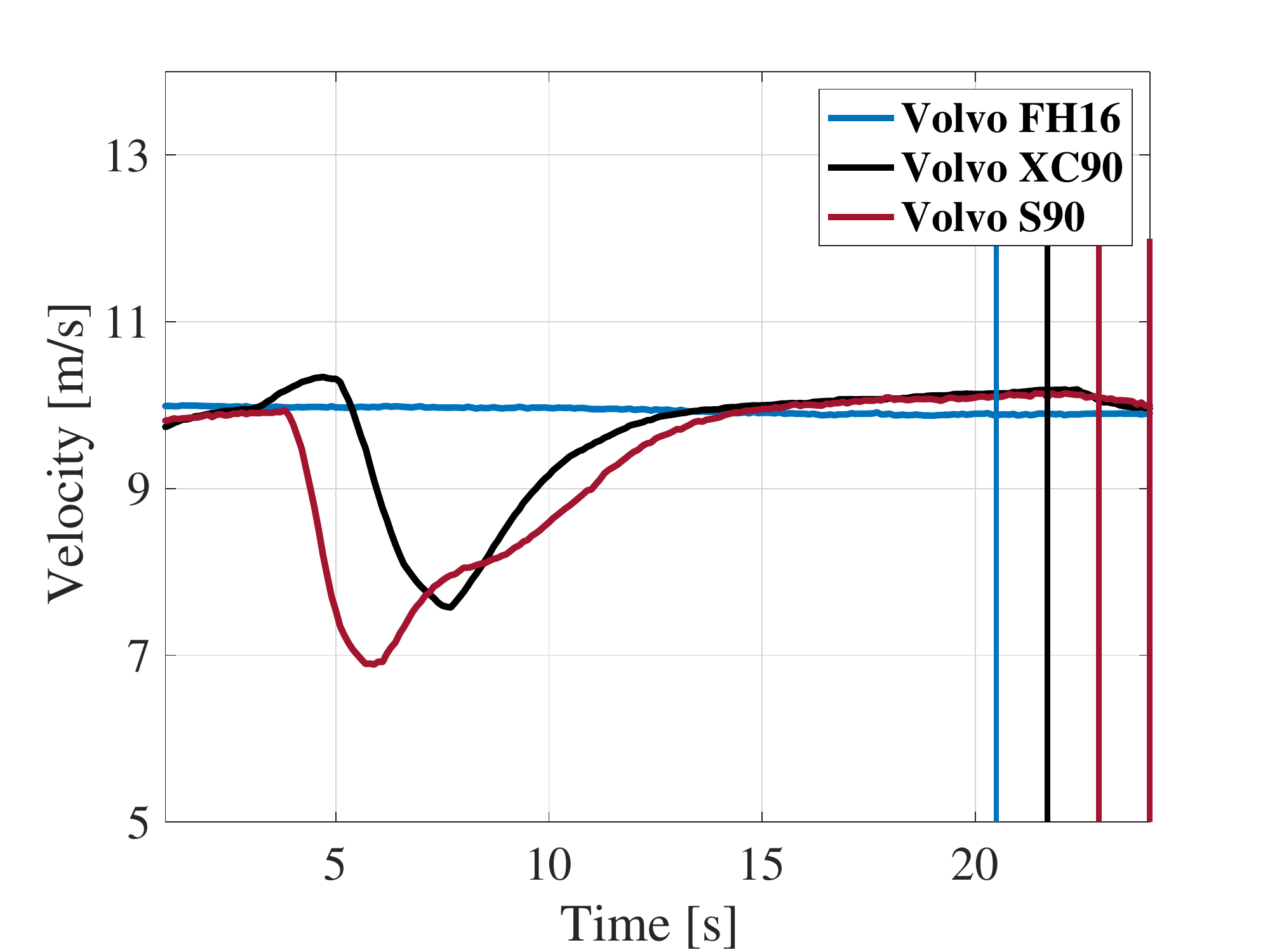}\label{fig:Speed}}
\subfloat[]{\includegraphics[width=0.32\textwidth]{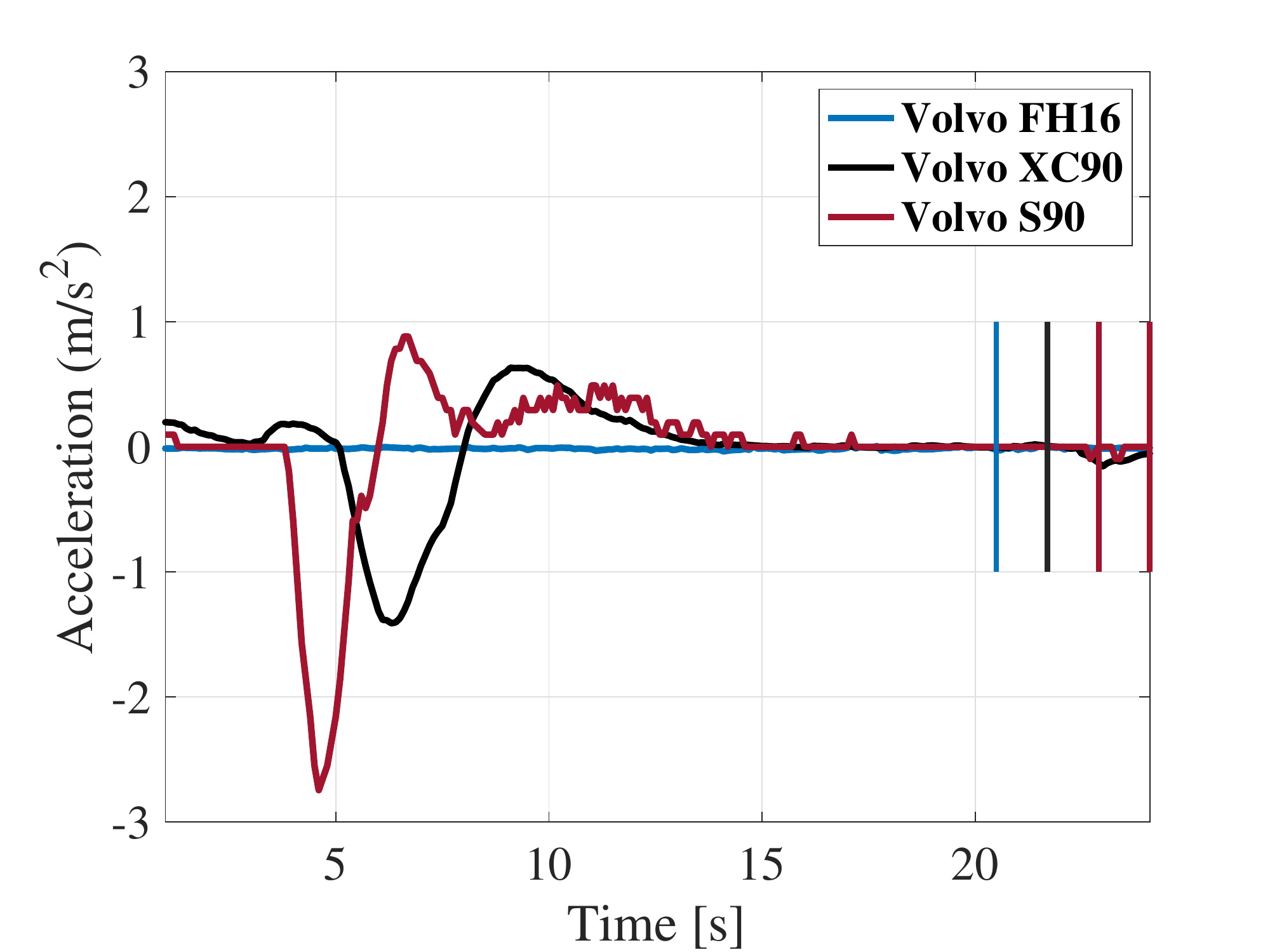}\label{fig:Acceleration}}\\
\vspace{-10pt}
\subfloat[]{ \includegraphics[width=0.32\textwidth]{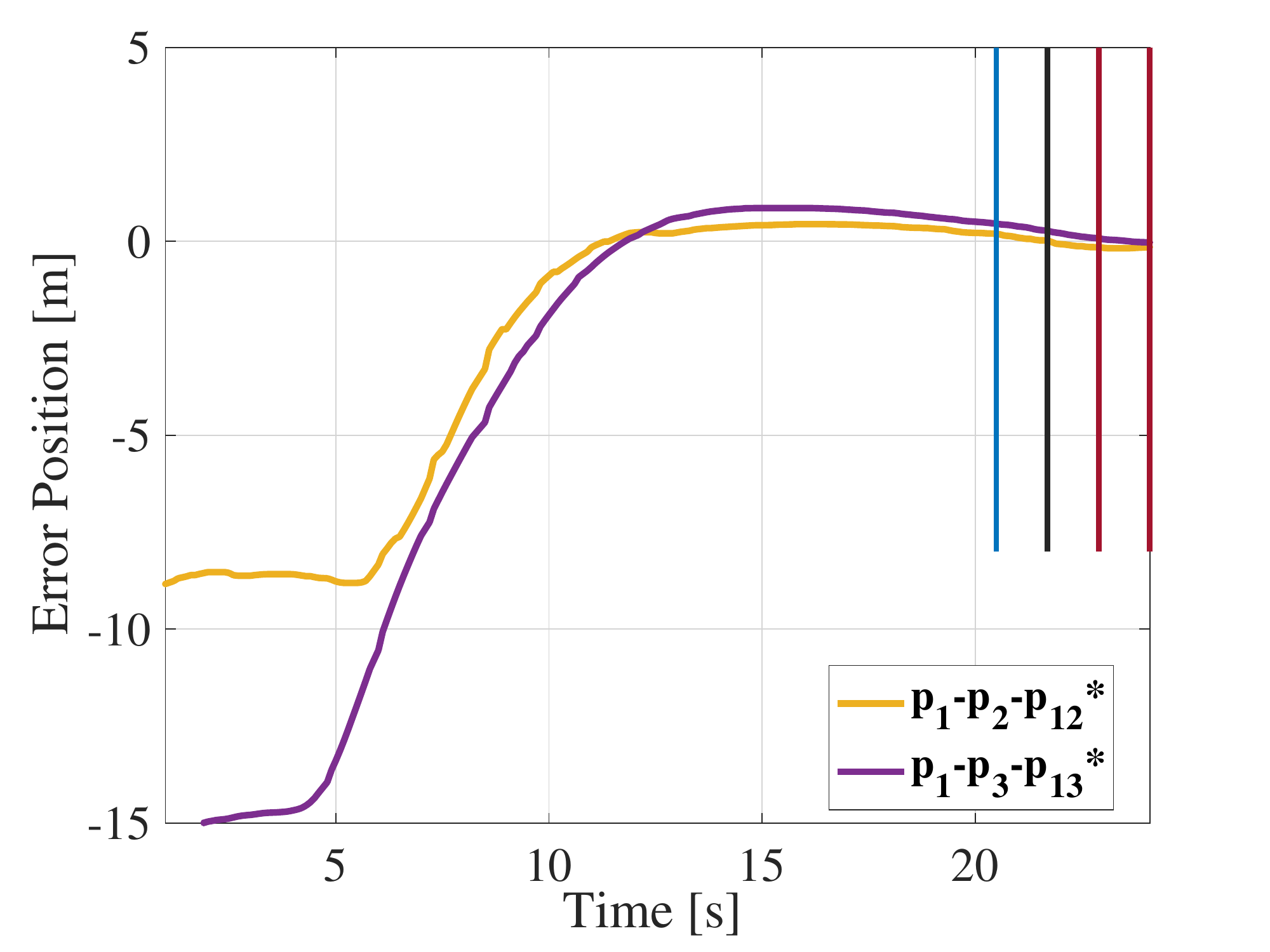}\label{fig:PositionError}}
\subfloat[]{\includegraphics[width=0.32\textwidth]{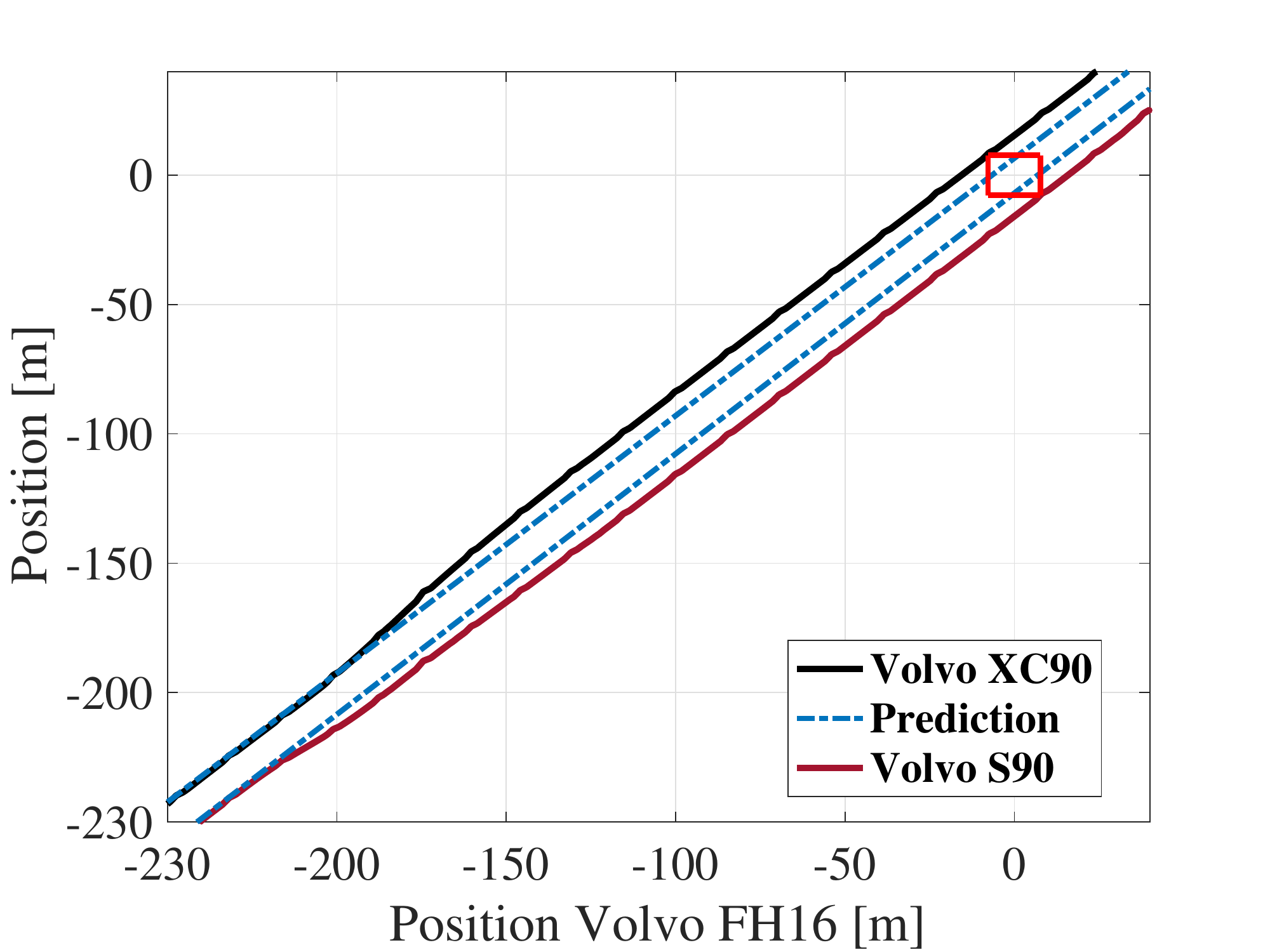}\label{fig:risultatoQuadrato}}
\subfloat[]{ \includegraphics[width=0.32\textwidth]{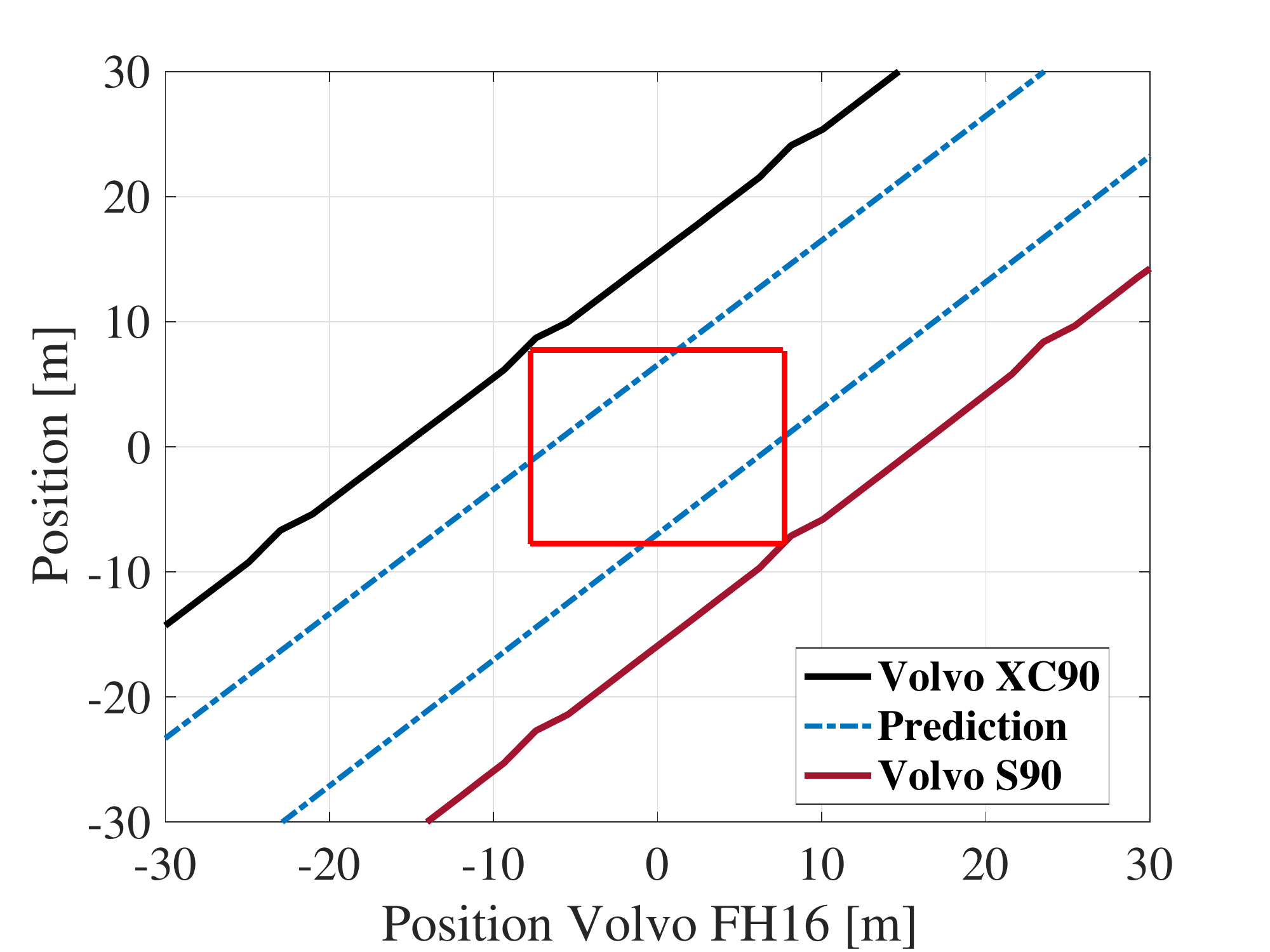}\label{fig:risultatoQuadratoZoom}}\\
\caption{Experimental Results: a) Time history of the position (beginning and end of the CA: solid horizontal lines); b) Time history of the vehicle velocity; c) Time history of the vehicle acceleration; d) Time history of the position errors w.r.t. desired inter-vehicle gaps; e) Position of the $2^{nd}$ vehicle and the $3^{rd}$ vehicle, vs position of the $1^{st}$ vehicle (colliding area: rectangular area; theoretical trajectories without the control correction: dashed line); f) Position of the $2^{nd}$ vehicle and the $3^{rd}$ vehicle, vs position of the $1^{st}$ vehicle: Zoom on the colliding area.
In figures a), b), c) and d) the blue vertical line indicates the time instant at which the 1st vehicle enters the CA; the black vertical line indicates the time instant at which the $1^{st}$ vehicle exits and $2^{nd}$ enters the CA; the bordeaux vertical line indicates the time instant at which the $2^{nd}$ vehicle exits and $3^{rd}$ enters the CA; the second bordeaux vertical line indicates the time instant at which the $3^{rd}$ vehicle exits the CA.}
\vspace{-10pt}
\label{Risultati_esperimenti}
\end{figure*}

Results in \cref{Risultati_esperimenti} show the effectiveness of the finite-time cooperative control protocol in guaranteeing the safe crossing at an intersection, with communication happening over 5G (parameters values characterizing the experimental scenarios are summarized in \Cref{tab:table2}).
Full video of the experiments can be found at \textit{https://www.youtube.com/watch?v=rmjJkIlFMJ4}, while some snapshots are in Fig. \ref{esperimento_incrocio}.
Specifically, the time histories of positions, velocities, and accelerations reported respectively in \cref{fig:Position},  \cref{fig:Speed} and \cref{fig:Acceleration}, confirm that the cooperative control protocol guarantees the exclusive vehicles access into the CA, whose boundaries are indicated with solid red horizontal lines in \cref{fig:Position}.
Indeed, only when the first vehicle has exited the CA, the second vehicle is just ready to enter the intersection (as highlighted by the vertical solid line).
The achievement of the required inter-vehicle spacing correctly matches with the reaching of a common velocity (see \cref{fig:Speed}).
According to the theoretical derivation, the time history of the position error converges to zero with a finite settling time $T$ of about $20 \; [s]$ (see \cref{fig:PositionError}).
Hence, the cooperative algorithm safely converges in a finite time before the first vehicle accesses the CA, hence ensuring collision avoidance at the junction.

To better appreciate the effectiveness of the proposed control approach in avoiding collisions, in \cref{fig:risultatoQuadrato} and \cref{fig:risultatoQuadratoZoom} the position of both the second and the third vehicle, i.e. $p_2(t)$ and $p_3(t)$, are plotted against the position of the first vehicle, i.e. $p_1(t)$.
Here the red square area represents the set of positions for which collision occurs and, as it can be easily observed, both trajectories (solid lines) tangentially touch the critical colliding area, indicated by the red square, so it  follows that, as soon as a vehicle exits the CA, the next one is just ready to enter.
Conversely, the ideal trajectories (dashed-dotted line), i.e. the unsafe trajectories that vehicles would have followed if their initial velocities would have been held without any correction, uncover the occurrence of collisions in the absence of control. 

\begin{table}[!ht]
\footnotesize
  \centering
  \begin{tabular}{lc}
    \toprule
  \textbf{Parameters}& \textbf{Values}\\
  Positions initial condition $[m]$  & {} \\
   $[p_{1}(0), p_{2}(0), p_{3}(0)]^{\top}$ & $[-220, -235, -250]^{\top}$ \\
Velocities initial condition $[m/s]$ & {} \\
   $[v_{1}(0), v_{2}(0), v_{3}(0)]^{\top}$ & $[10, 9.7, 9.8]^{\top}$ \\
   Control gain $\alpha$ & 0.1   \\
   Vehicle length $L_{i} \; [m]$   & $L_{1}=7.8 \; \; L_{2}=L_{3}= 4.6$ \\
   Headway time $h \; [s]$ & $0.8$ \\
   Distance at standstill $r_{ij} \; [m]$ & $10$\\
    \bottomrule
      {} \\
     \end{tabular}
\caption{EXPERIMENTAL SCENARIO PARAMETERS.}
       \label{tab:table2}
       \vspace{-15pt}
\end{table}

\section{Conclusion}\label{sec:conclusions}
\LMC{In this paper we have successfully modelled, implemented, deployed and tested an autonomous driving system operating in proximity of a street intersection over a pre-5G test network. Experiments have been carried out in urban scenarios where buildings and objects of different shapes and materials act as obstacles against most used sensors, and point to point communication paradigms. The control algorithm has been thought and designed in a distributed way, with the objective of minimizing the impact of transmissions and the computational load on remote devices. Network performance, along with control results, has been analyzed and discussed. }

\section{Acknowledgement}
\LMC{Experiments for this project have been realized in a tight cooperation with Ericsson AB and REVERE Lab (Chalmers). Special thanks to Keerthi Kumar Nagalapur, Henrik Sahlin, Magnus Castell and Anders Aronsson, from the Ericsson team, for the great ideas, the constant help and the frequent discussions. Also, we would like to thank Christian Berger, Arpit Karsolia and Fredrik von Corswant from Revere LAB and Niklas Lundin from AstaZero. Without their contribution and passion, our experiments would have never been possible.}


%


%

\ifCLASSOPTIONcaptionsoff
  \newpage
\fi



%
%
%
\bibliographystyle{IEEEtran}
\bibliography{bibliography}

%

%
%
%




\end{document}